\newcommand{\rline}{{\mathbb R}}
\newcommand{\dist}{\text{dist}}
\newcommand{\bear}{\text{bearing}}
\newcommand{\kll}{{\frac{k_l -k_{l+1}}{2}}}
\newcommand{\bbm}[1]{\left[\begin{matrix} #1 \end{matrix}\right]}
\newcommand{\sbm}[1]{\left[\begin{smallmatrix} #1
   \end{smallmatrix}\right]}
\newcommand{\rfb}[1]{\mbox{\rm
   (\ref{#1})}\ifx\undefined\stillediting\else:\fbox{$#1$}\fi}
\newenvironment{proof}{\vspace{.1cm}\noindent{\sc Proof.}\hspace{0.10cm}\,\,}{$\hfill\Box$\vspace{.3cm}} 
\newtheorem{theorem}{Theorem}[section] 
\newtheorem{defi}[theorem]{Definition}
\newtheorem{lemma}[theorem]{Lemma} 
\newtheorem{pro}[theorem]{Proposition}
\title{\LARGE \bf
{Scaling up the formation of agents with heterogeneous sensing: mixed distance and bearing-only}
}
\author{Jin Chen$^{1}$ and Bayu Jayawardhana$^1$ and Hector Garcia de Marina$^2$
\thanks{*This work is supported by Center of Expertise Flexible Manufacturing Systems.}
\thanks{$^{1}$Jin Chen and Bayu Jayawardhana are with the Engineering and Technology Institute Groningen, Faculty of Science and Engineering, University of Groningen, the Netherlands. 
        {\tt\small jin.chen@rug.nl, b.jayawardhana@rug.nl}}%
\thanks{$^{2}$Hector Garcia de Marina is with Department of Computer Engineer, Automation and Robotics, and with CITIC, the University of Granada, 
Spain,  {\tt\small hgdemarina.@ugr.es}. The work of Hector Garcia de Marina is supported by the grant Ramon y Cajal RYC2020-030090-I from the Spanish Ministry of Science.}%
}
\begin{document}

\maketitle
\thispagestyle{empty}
\pagestyle{empty}

\begin{abstract}

Unlike the case with identical neighboring agents whose actions are mirrored, the problem of distributed formation control design with heterogeneous sensing is not straightforward. In this paper, we consider the problem of distributed formation control where each agent can only control distances or bearings with its neighbors. We firstly develop a rigidity theory with heterogeneous sensing to ensure that the desired shape is well-posed. Secondly, we propose an iterative method that allows us to scale up the number of agents with heterogeneous sensing. Finally, numerical simulations show the effectiveness of our method.

\end{abstract}

\section{Introduction}

Distributed formation control is one of the most actively studied topics in multi-agent systems \cite{oh2015survey}. Among its many potential applications, 
it is one of the key elements towards the yet unsolved problem of realizing reliable and predictable robot swarms \cite{yang2018grand}. The most popular design technique for distributed formation controllers is based on the  gradient descent of potential functions that seek to minimize the sum of inter-agent geometry errors, which can be defined based on 
distances  \cite{sun2014finite}, bearings \cite{zhao2015bearing}, or inner angles, \cite{chen2021maneuvering} in order to achieve  prescribed \textit{formation shapes} \cite{anderson2008rigid,dimarogonas2010stability,krick2009stabilisation,de2021distributed}.

In the literature on formation control, the use of homogeneous sensing mechanism for each agent  is ubiquitous. In this case, all agents maintain the same type of geometrical constraints, e.g., 
distance or bearing, with their neighboring agents \cite{oh2015survey}. Having two neighboring agents sharing the same formation controller is mathematically convenient for the stability analysis of the multi-agent systems, where the inherent symmetry can be exploited in the analysis. However, there is a lack of methodologies for scaling up formations where neighboring agents control different geometric variables. In this paper, we propose a systematic and iterative method to scale the number of agents while alternating the controlled geometric variable for the new agent that joins the formation network.

There are few works in literature that investigate such mixed constraints or heterogeneous sensing case. The paper \cite{bishop2015distributed} focuses on the combination of distance and bearing constraints that are defined in an undirected graph. Such an assumption entails that for a pair of neighboring agents, both agents control the same geometry. In this paper, we relax such an assumption, whereby for one pair of neighboring agents, one agent can control either the distance or the bearing between them. The authors in \cite{kwon2019generalized} consider the combination of distance, bearing and angle constraints at the same time, encompassing a general concept for \emph{rigidity}; still, as in \cite{bishop2015distributed} the neighboring agents have symmetrical control actions for the same edge in the graph. 

When neighboring agents control the same geometrical variable, it can restrict their application. Firstly, mirrored control actions typically demand neighboring robots to be equipped with the same type and quality of sensors. Definitively, different sensing mechanisms can break the symmetry execution expected by the control design. Secondly, certain robots may have neighboring robots that can only work with different geometrical constraints, e.g., distance constraint with one neighbor and bearing one with another neighbor. In this case, the robot must be equipped with multiple types of sensors. Thirdly, in unstructured environments, the sensing measurement may become less accurate in certain conditions. Let us provide two specific examples to illustrate the applicability of our approach. Firstly, we can use our method to check whether an agent can switch between sensing modes based on its initial distance to neighbors. For example, if our approach guarantees stability for both sensing modes at all times, then a merging agent can use bearing-based sensing when it is far from other agents and switch to distance-based sensing when it is close to its neighbors. Secondly, we can evaluate the impact of an agent on the robustness of formation by determining whether the desired shape remains stable regardless of the type of sensor used by that agent. If the formation shape remains stable regardless of the sensor type, then that agent can safely switch sensing mechanisms if its sensor systems fail. Otherwise, neighboring agents may need to switch their sensing mechanisms simultaneously to prevent instability in the formation shape. To guarantee stability across the entire network, we may need to evaluate all possible combinations of edges in the heterogeneous network. 

This paper addresses the formation control problem considering heterogeneous sensing between neighboring agents. More precisely, each neighboring agent controls inter-agent distances or bearings exclusively. Such problem has initially been  
studied in \cite{chan2021stability}, where the authors analyzed the stability of the two and three-agent cases rigorously.  
In particular, the authors in \cite{chan2021stability} consider each neighboring agent follows a local gradient-descent control action for its controlled variable, distance or bearing-only. However, when both local control actions are combined in the same graph's edge, then the stability problem becomes not trivial. It can even introduce robustness issues if the control gains and the desired geometry variables are not chosen carefully. 
Therefore, it introduces control design challenges for scaling up the number of agents in a formation control problem with heterogeneous neighboring agents. In this work, we explore and analyze the stability of merging heterogeneous neighboring agents into an existing (stable) heterogeneous formation. In particular, we provide a sufficient condition of the control gain for the new merging agent to guarantee the stability of the formation.  
Such results are based on the following two novel notions: (i). the heterogeneous sensing rigidity for defining the desired shape of a heterogeneous sensing formation; and (ii). the heterogeneous persistence for dealing with the interaction between a pair of heterogeneous neighboring agents. These conditions imply that scaling up the formation requires both the heterogeneous persistence and properly chosen control gains for the newly defined edges. 
Numerical simulations shows that either improper control gains or non-heterogeneous persistence can destroy the guarantee of reaching the desired formation shape.

This paper is organized as follows. We first present the required preliminaries in Section II. Then, we propose the theory of heterogeneous sensing rigidity in Section III to assist with defining the desired shape for a heterogeneous sensing formation. In the same Section III, we propose the definition of heterogeneous-sensing-persistent formation, which guarantees that our formation with \emph{unilateral interaction} topology has a feasible solution. Then, in the Section IV, we propose the formation control law dealing with neighboring agents with heterogeneous sensing, and we show how to merge newly heterogeneous agents to the existing formation. Numerical simulations are presented in Section V for illustrating merging case with six agents, and the possible consequences of not following our proposed design method. Finally, conclusions are presented in Section VI.

\section{Preliminaries and Problem Formulation}

\subsection{Notations}

Let $\left| \mathcal{S} \right|$ denote the cardinality of a given set $\mathcal{S}$. For a vector $x \in \mathbb{R}^n$, $x^{\top}$ denotes the transpose of $x$ and the 2-norm of $x$ is denoted by $\left\| x \right\| =\sqrt{x^{\top}x}$. The set of all combinations of linearly independent vectors $\upsilon _1,\dots ,\upsilon _k$ is denoted by $\mathbf{span}\left\{ \upsilon _1,\dots ,\upsilon _k \right\}$. For a vector $\upsilon\in\rline^2$, the notation $\angle \upsilon$ denotes the counter-clockwise angle of vector $\upsilon$ with respect to the unit vector $\sbm{1\\0}$, which lies on the $x$-axis of a global coordinate frame $\Sigma$. 
For a matrix $A\in \mathbb{R}^{m\times n}$, $\mathbf{Null}\left( A \right) \subset \mathbb{R}^n$, $\mathbf{trace}\left( A \right)$ and $\det \left( A \right) $ denote the null space, the trace and the determinant of $A$, respectively. The $n \times n$ identity matrix is denoted by $I_n$. For a given vector $\upsilon$, the notation $\mathrm{diag}\left( \upsilon \right)$ denotes the diagonal matrix whose diagonal elements are given by the entries of $ \upsilon$. Finally, for given matrices $A\in \mathbb{R}^{m\times n}$ and $B\in \mathbb{R}^{p\times q}$, the Kronecker product of $A$ and $B$ is denoted by $A\otimes B\in \mathbb{R}^{mp\times nq}$, and for a given dimension $d=2,3$, we denote $\tilde{A}=A\otimes I_d\in \mathbb{R}^{md\times nd}$ where $d=2$ refers to the $2D$ plane and $d=3$ corresponds to the $3D$ space.

\subsection{Graph theory}

A directed graph $\mathcal{G}$ is defined by the tuple $\left( \mathcal{V},\mathcal{E} \right)$, where $\mathcal{V}=\left\{ 1,2,\cdots ,n \right\} $ is the \textit{vertex} set and $\mathcal{E}\subseteq \mathcal{V}\times \mathcal{V}$ is the \textit{edge} set with $m=\left| \mathcal{E} \right|$ number of edges. 
If $(i,j)\in\mathcal{E}$, the ordered pair $(i,j)$ refers to the edge whose direction is represented by an arrow with node $i$ as the tail and node $j$ as the head. We assume throughout the paper that there is no self-loop in the considered graph $\mathcal{G}$, i.e., $\left( i,i \right) \notin \mathcal{E}$ for all $i\in \mathcal{V}$. 
The set of neighbors of vertex $i$ is denoted by $\mathcal{N}_i\triangleq \left\{ j\in \mathcal{V}|\left( i,j \right) \in \mathcal{E} \right\}$. Associated to $\mathcal{G}$, we define the incidence matrix $H\in \left\{ 0,\pm 1 \right\} ^{m\times n}$ with entries $\left[ H \right] _{ki}=-1$ if the vertex $i$ is the tail of edge $k$, $\left[ H \right] _{ki}=1$ if it is the head, and $\left[ H \right] _{ki}=0$ otherwise. Note that $\mathbf{span}\left\{ \mathbf{1}_n \right\} \subseteq \mathbf{Null}\left( H \right)$.

\subsection{Distance-based and bearing-only rigid formations}

{Given a finite collection of $n$ points $\left\{ p_i \right\} _{i=1}^{n}$ in $\mathbb{R}^d$ with $n\geqslant 2$ and $d\in \{2,3\}$, a configuration is denoted as $p=\left[ p_{1}^{\top},\dots ,p_{n}^{\top} \right] ^{\top}\in \mathbb{R}^{dn}$, where $p_i \in \mathbb{R}^{d}$ represents the position of agent $i$. Whenever it is clear from the context, the $i$-th robot is labeled as $Ri$, e.g. $R1$ refers to robot $1$. A framework in $\mathbb{R}^d$, denoted as $\mathcal{G}\left( p \right)$, is a combination of an undirected graph $\mathcal{G}\left( \mathcal{V},\mathcal{E} \right)$ and a configuration $p$, where vertex $i\in \mathcal{V}$ in the graph is mapped to the point $p_i$ in the configuration.

We assume $p_i\ne p_j$ if $i \ne j$, i.e., two agents can not be at the same position. For points $p_i \in \mathbb{R}^{2}$ and $p_j \in \mathbb{R}^{2}$ of the formation, we define the relative position vector as $z_{ij}=p_j-p_i\in \mathbb{R}^2$, the distance as $d_{ij}=\left\| z_{ij} \right\| \in \mathbb{R}_{>0}$ and the relative bearing as $g_{ij}=\frac{z_{ij}}{d_{ij}}\in \mathbb{R}^2$, all relative to a global coordinate frame $\varSigma ^g$. For illustrative purpose, these definitions are shown  
in Figure \ref{fig:preliminary}. By definition,  
$z_{ji}=-z_{ij}, d_{ji}=d_{ij}, g_{ji}=-g_{ij}$.

\begin{figure}
    \centering
    \includegraphics[width=0.5 \linewidth]{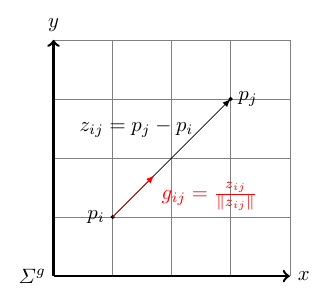}
    \caption{Illustration of the position of two agents $p_i$ and $p_j$ in $2$D plane relative to the global coordinate frame $\Sigma^g$. The relative position vector $z_{ij}$ and its bearing vector $g_{ij}$ are shown in black and red arrows, respectively.}
    \label{fig:preliminary}
\end{figure}

{An important methodology for studying and designing  
multi-agent formation control is the \textit{rigidity theory} that connects geometrical and graph properties of the desired formation shape. A number of different rigidity concepts have been proposed for the past decades based on the underlying information used to define the shape. Two widely studied ones are the  
\textit{distance-based rigidity theory} \cite{asimow1979rigidity} and the \textit{bearing-only based rigidity theory} \cite{zhao2015bearing}, \cite{trinh2018bearing}.} 

For the distance-based method, a \textit{rigid} formation is associated to a shape whose corresponding framework $\mathcal{G}\left( p \right)$ is \textit{distance rigid} \cite{anderson2008rigid, asimow1979rigidity}. 
As studied in \cite{anderson2008rigid, asimow1979rigidity}, an edge function $f_{\mathcal{G}}^{\text{dist}}\,\,: \mathbb{R}^{mn}\rightarrow \mathbb{R}^{\left| \mathcal{E} \right|}$ associated to the distance rigid framework \footnote{Later, we will introduce the edge function $f_{\mathcal{G}}^{\text{bearing}}$ to deal with bearings. In order to do not overload the notation, we will drop the superscript of $f_{\mathcal{G}}$ when it is clear from the context whether it is \emph{distance} or \emph{bearing}.} is given by
\begin{equation}\label{eq:edge_function_dist}
    f_{\mathcal{G}}^{\text{dist}}\left( p \right) =\underset{\forall k\in \left\{ 1,\cdots ,\left| \mathcal{E} \right| \right\}}{\text{col}}\left( \left\| z_k \right\| ^2 \right), 
\end{equation}
which collects all distance information in all edges of the (undirected\footnote{The usage of the index $k$ denotes the usage of the undirected version of the graph $\mathcal{G}$.}) graph $\mathcal G$. Correspondingly, a framework $\mathcal{G}\left( p \right)$ is {\it locally rigid} if for every $p\in \mathbb{R}^{mn}$, there exists a neighborhood $\mathcal{P}$ of $p$ such that $f_{\mathcal{G}}^{-1}\left( f_{\mathcal{G}}\left( p \right) \right) \cap \mathcal{P}=f_{\mathcal{K}}^{-1}\left( f_{\mathcal{K}}\left( p \right) \right) \cap \mathcal{P}
$, where $\mathcal{K}$ is a complete graph with the same vertex set $\mathcal{V}$ of $\mathcal{G}$. A framework is \textit{globally rigid} if $f_{\mathcal{G}}^{-1}\left( f_{\mathcal{G}}\left( p \right) \right) =f_{\mathcal{K}}^{-1}\left( f_{\mathcal{K}}\left( p \right) \right)$ for all $p\in \mathbb{R}^{mn}$. 

The change in the framework $\mathcal G(p)$ due to an infinitesimal displacement $\delta p$ of $p$ can be characterized by the Jacobian of $f_{\mathcal G}$: 
\begin{equation}
    \frac{\partial f_{\mathcal{G}}\left( p \right)}{\partial p}=2R\left( z \right) 
\end{equation}
where $R\left( z \right) \in \mathbb{R}^{\left| \mathcal{E} \right|\times m\left| \mathcal{V} \right|}$ is known as the {\it rigidity matrix} \cite{anderson2008rigid} of the framework $\mathcal{G}\left( p \right)$ with the undirected graph $\mathcal{G}$. The nontrivial kernel of $R(z)$ includes the translation and rotation of the whole framework in the Cartesian space. An {\it infinitesimal rigid} framework $\mathcal{G}\left( p \right)$ is a rigid framework that is invariant when is subjected to such infinitesimal displacement $\delta p$ (e.g. invariant under translation and rotation). In this case, we have 
$f_{\mathcal{G}}\left( p+\delta p \right) =f_{\mathcal{G}}\left( p \right)$ for all $p$. 
A framework $\mathcal{G}\left( p \right)$ is \textit{infinitesimally rigid} 
in $\mathbb{R}^2$ (resp. $\mathbb{R}^3$) if $\text{rank} R\left( z \right) =2n-3$ (resp. $3n-6$). A framework $\mathcal{G}\left( p \right)$ is \textit{minimally rigid} if it has exactly $2n-3$ (resp. $3n-6$) edges in $\mathbb{R}^2$ (resp. $\mathbb{R}^3$). 

Closely related to the distance-based rigidity framework described above, the bearing-only based rigid frameworks can be defined {\it vis-\`a-vis} as the distanced-based one where the edge function of distance rigid framework is replaced by bearing information, i.e. 
\begin{equation}\label{eq:edge_function_bearing}
    f_{\mathcal{G}}^{\text{bearing}}\left( p \right) =\underset{\forall k\in \left\{ 1,\cdots ,\left| \mathcal{E} \right| \right\}}{\text{col}}\left( g_k \right),
    \end{equation}
where $g_k=\frac{z_k}{d_k}$ is the relative bearing as defined before. 

}

\subsection{Distributed formation control law}

Using the framework $\mathcal G(p)$ as defined before, we assume that each agent position $p_i$ evolves in $\rline^d$ according to 
\begin{equation}\label{eq:agent_dynamics}
    \dot{p}_i=u_i, \quad i\in \left\{ 1,\dots ,n \right\}
\end{equation}
for all $i=1,\ldots, n$, where  
$u_i\in \mathbb{R}^d$ is the velocity control input. 
As before, we denote the concatenated control input by 
$u=\left[ \begin{matrix}	u_{1}^{\top}&		\cdots&		u_{n}^{\top}\\ \end{matrix} \right] ^{\top}\in \mathbb{R}^{nd}$.

\subsubsection{Distance-based formation control}

Using the distance rigidity framework, we can design a distance-based formation control in the following way. Let $d\in \rline^{|\mathcal E|}$ be the vector containing the desired distances associated with the set of desired relative positions $z^*$ such that $f_{\mathcal G}^{\text{dist}}(z^*)=d^*$, and define the distance formation error in the $\ell$-th edge $e_{\ell}^{\dist}=\|z_\ell\|^2-d_\ell^2$. Suppose that for every edge $\ell$, $V_{\ell}$ is a positive semi-definite function of $e_{\ell}^{\dist}$ such that $V^{\text{dist}}_\ell(e_{\ell}^{\dist})=0 \Leftrightarrow$ $e_{\ell}^{\dist}=0$, which implies that the edge distance is equal to the desired one.  
Accordingly, the corresponding gradient-based control law for each agent $i$ is given by 
\begin{equation*}
u_i^{\text{dist}} = -\sum_{\{\ell \, | \, (i,j)=\mathcal E_\ell^\dist\}} k_\ell \frac{\partial V^{\text{dist}}_\ell(e_{\ell}^{\dist})}{\partial p_i}^\top,
\end{equation*}
where $\mathcal E_\ell^\dist$ denotes the $\ell$-th edge in the set $\mathcal E^\dist$ and $k_\ell>0$ is an associated control gain. 
In this paper, we will study a particular distance-based potential function defined by $V^{\text{dist}}_{\ell}=\frac{k_l(e_{\ell}^{\dist})^2}{4 d_\ell^2}$, 
which gives us the following local control law  
\begin{equation}\label{eq:gradient_control}
u_i^{\text{dist}} = -\sum_{\{\ell \, | \, (i,j)=\mathcal E_\ell^\dist\}}k_\ell e_{\ell}^{\dist}\frac{z_{\ell}}{2d_\ell^2}.
\end{equation}
Note that the above control law is the distance-based formation control law which guarantees exponential stability due to the use of quadratic potential function $V^{\text{dist}}_{\ell}$ as also studied in 
\cite{sun2016exponential}.

\subsubsection{Bearing-only formation control}
Similar to the distance-based formation control above, using the bearing-only based rigid frameworks, the desired formation shape is defined by a set of inter-agent bearing constraints. Let $g^*\in\rline^{|\mathcal E|}$ define the desired bearing for all edges in $\mathcal E$ corresponding to the set of desired relative position $z^*$ such that $f_{\mathcal G}^{\text{bearing}}(z^*)=g^*$. 
Let us define the bearing error at the $k$-th edge by $e_{\ell}^\bear =g_{\ell} -g_{\ell}^{*}$. 

We will now define the corresponding potential function $V$ that can be used in the gradient control as in \eqref{eq:gradient_control}. For each edge $\ell$ with the associated agent pair $(i,j)$, we consider the following local potential function $V^{\text{bearing}}_{\ell}\left(z_{\ell},e_{\ell}^\bear\right) =k_l\left\| z_{\ell}^{*} \right\| \left\| z_{\ell} \right\| \left\| e_{\ell}^\bear \right\| ^2$. It is obvious that $V^{\text{bearing}}_{\ell} \geqslant 0$ and $V^{\text{bearing}}_{\ell}(e_{\ell}^\bear)=0\Longleftrightarrow e_{\ell}^\bear = 0\Longleftrightarrow  g_{\ell}  = g_{\ell}^{*}$. Despite $V^{\text{bearing}}_\ell$ has a dependence on both $z_\ell$ and $e_\ell^\bear$, the corresponding bearing-only gradient control law for an agent pair $(i,j)$ does not depend on $z_\ell$ and it is given by 

\begin{equation}\label{eq:bearing_gradient_control}
\begin{aligned}
    u^{\text{bearing}}_{i} & =-\sum_{\{l| (i,j)=\mathcal E_\ell^\bear \}} k_\ell\frac{\partial V^{\text{bearing}}_\ell(z_\ell,e_\ell^\bear)}{\partial p_i}^\top\\ 
   &= \sum_{\{l| (i,j)=\mathcal E_\ell^\bear \}}k_\ell\left\| z_{\ell}^{*} \right\|e_{\ell}^\bear 
\end{aligned}
\end{equation}
Note that this law is a modified version of the bearing-only formation control law (19) as presented in \cite{zhao2019bearing}. In comparison to the one used in \cite{zhao2019bearing}, we introduce a scaling factor to the potential function $V^{\text{bearing}}_{\ell}$ in each edge $l$, which does not change the sign property of the function and its time-derivative $\dot V^{\text{bearing}}_{\ell}$. Therefore, the modified control law still ensures the almost global asymptotic stability of the desired formation shape provided that the formation shape is infinitesimally bearing rigid. This scaling factor is introduced to ensure that the magnitude of the bearing-only and distance-based control inputs are of the same order.

{  
As discussed in the Introduction, the paper \cite{chan2021stability} explores distributed formation control with heterogeneous sensing mechanism containing two or three agents. It is concluded in \cite{chan2021stability} that when agents have different sensing mechanism on-board, there can exist some undesired formation shapes or group motions. For instance, when we consider the line formation of two agents with \textit{one distance and one bearing agent}, it can admit a moving formation where the two agents move together forming an incorrect line. It is further shown in \cite{chan2021stability} that such equilibrium point is unstable. The same observation is found in three agents case. In the formation of three agents with \textit{one distance and two bearing agents}, the agents may form an undesired formation shape with a constant velocity while that with \textit{two distance and one bearing agents}, the agents can form a flipped formation or a co-linear formation. 
}

\subsection{Problem Formulation}

{Based on the background information above, we can now formulate our distributed formation control problem where the neighboring agents use heterogeneous sensing information to maintain the formation. }
{As discussed in the Introduction, we study the formation system in which the  agents equip heterogeneous sensing devices so that each agent can only maintain the prescribed distance or bearing constraint with its neighbors using the available local distance or bearing information, respectively. Correspondingly, the desired shape of the formation is determined by a mixed set of distance and bearing sets and each agent use a  
different type of control law (e.g., distance-based or bearing-only control law) depending on the available local sensor systems. 

Let us consider two heterogeneous agent sets $\mathcal V^{\text{dist}}$ and $\mathcal V^{\text{bearing}}$. All agents $i$ in $\mathcal V^{\text{dist}}$ are distance-based agents which can only maintain desired distance to their neighbors given by the desired distance set $\bar{d}_i=\left\{ d_{ij}^{*}>0|d_{ij}^{*}\in \mathbb{R},j\in \mathcal{N}_i \right\}$. 
Each agent $i\in\mathcal V^{\text{dist}}$ is able to measure the relative positions with its neighbors in its \textit{local coordinate frame} $\varSigma^i$ which is not necessarily aligned with \textit{global coordinate frame} $\varSigma^g$. 
On the other hand, all agents $i$ in $\mathcal V^{\text{bearing}}$ are bearing-only agents whose goal are to maintain a desired bearing with their neighbors following the desired bearing set $\bar{g}_i=\left\{ g_{ij}^{*}>0|g_{ij}^{*}\in \mathbb{R},j\in \mathcal{N}_i \right\}$. Each agent $i\in\mathcal V^{\text{bearing}}$ is able to measure the relative bearings with its neighbors in its \textit{local coordinate frame} $\varSigma^i$ which is aligned with \textit{global coordinate frame} $\varSigma^g$. 

{\bf Distributed formation control with heterogeneous sensing problems: } Given the distance-based agents $\mathcal V^{\text{dist}}$ with the associated desired distance constraint set $\bar d_i$ and the bearing-only agents $\mathcal V^{\text{bearing}}$ with the associated desired bearing constraint set $\bar g_i$: 
\begin{description}
    \item[Q1.] Determine the geometric rigidity properties of the desired formation shape.
    \item[Q2.] For each agent $i$, design a local control law 
    \[
    u_i = f_i(z_\ell,e_\ell \, | \, \, \forall \ell \, \,  \text{ s.t. } (i,j)= \mathcal E_\ell)
    \]
    with Lipschitz continuous $f_i$ and with $e_\ell$ be either the distance error or bearing error on the $\ell$-th edge depending on the agent's type, such that all agents converge to the desired shape, i.e. $\lim_{t\to\infty} e_\ell(t) = 0$ for all $\ell$.
    \item[Q3.] Given a new agent $j$ to be added to the formation, determine the new formation configuration and local control law $u_j$ for agent $j$ and its neighboring agents such that all agents converge to the newly extended shape without compromising the stability of the global formation.
\end{description}    

This paper proposes to answer the three questions. Firstly, to address Q1, the paper introduces the concept of heterogeneous sensing rigidity in Section III. Secondly, the control law \eqref{eq:gradient_control} and \eqref{eq:bearing_gradient_control} in preliminaries already satisfy the problem Q2.} Finally, in Q3, we focus on the merging problem in heterogeneous sensing formation in which the new agent $\ell=n+1$ (either it is a distance or a bearing agent) will join a $n-$agent stable formation. This paper aims to investigate conditions so that the control laws that meet Q2 can handle the scaling up of the formation.  

\section{Heterogeneous Sensing} 

\subsection{Heterogeneous sensing rigidity}

Based on the aforementioned problem formulation, we propose in this section a rigidity graph framework that takes into account multiple sensing mechanism. The basic problem that it will address is whether a framework can be uniquely determined up to a translation factor given the relative bearings and distances between pairs of agents in the framework. 

Different from the framework in the distance or bearing-only based rigidity framework, the proposed heterogeneous sensing rigidity framework  
$\mathcal{G}\left( p \right)$ comprises two-layer directed graphs. 
Consider an arbitrary orientation of the graph $\mathcal{G}$ with the vertex set $\mathcal V$ and with two edge sets $\mathcal E^\dist$ and $\mathcal E^\bear$ associated to the distance-based edges and the bearing-only edges, respectively. Thus  $\mathcal{G}=\left(\mathcal{V},\mathcal{E}^{\dist}\cup\mathcal{E}^{\bear} \right)$. 
As before, by denoting $p$ as a configuration in $\mathcal G$, the framework  
$\mathcal{G}\left( p \right)$ is given by 
\begin{equation}
\begin{split}
    &z_{ij}\triangleq p_j-p_i,  \quad d_{ij}=\left\| z_{ij} \right\| , \quad \forall \left( i,j \right) \in \mathcal E^\dist
    \\
     &z_{ij}\triangleq p_j-p_i, \quad g_{ij}=\frac{z_{ij}}{\left\| z_{ij} \right\|}, \quad \forall \left( i,j \right) \in \mathcal{E}^{\bear} ,
\end{split}
\end{equation}
where the vectors $z_{ij}$ are the relative position based on $\mathcal E^\dist$ and $\mathcal{E}^{\bear}$, respectively, the scalar $d_{ij}$ is the distance between $p_j$ and $p_i$ in the graph $(\mathcal V,\mathcal E^\dist)$ and the unit vector $g_{ij}$ is the relative bearing of $p_j$ to $p_i$ in the graph $(\mathcal V,\mathcal{E}^{\bear})$. Then, following the standard notion to define rigidity by trivial motion, like \cite{zhao2015bearing}, \cite{sun2016exponential}, we try to introduce the concept of \textit{ infinitesimally heterogeneous sensing rigidity} in the same way.

\begin{defi}
The \textit{distance-bearing} function $F_{DB}\left( p \right) : \mathbb{R}^{dn}\rightarrow \mathbb{R}^{\left( d+1 \right) m}$ is defined by
\begin{equation}\label{eq:bearing-distance_func}
F_{DB}\left( p \right) \triangleq \left[ \left\| z_1 \right\| \,\,\cdots \,\,\left\| z_m \right\| \,\, g_{1}^{\top}\,\,\cdots \,\,g_{m}^{\top} \right] ^{\top}\in \mathbb{R}^{\left( d+1 \right) m}.
\end{equation}
\end{defi}

Analogous to the rigidity matrix and bearing rigidity matrix, we can define the \textit{distance-bearing} rigidity matrix as the Jacobian of the distance-bearing function
\begin{equation}
R\left( p \right) \triangleq \frac{\partial F_{DB}\left( p \right)}{\partial p}\in \mathbb{R}^{\left( d+1 \right) m\times dn}
\end{equation}
Let $\delta p$ denote an infinitesimal variation of the configuration $p$. If it is such that $R\left( p \right) \delta p=0$, then $\delta p$ is called an infinitesimal motion of $\mathcal{G}\left( p \right)$. Corresponding to the intersection of the infinitesimal motions of distance rigidity and bearing rigidity, the distance-bearing preserving motions of a framework $\mathcal{G}\left( p \right)$ are only the translational motions.

An infinitesimal motion is called trivial if it only corresponds to a translation of the entire framework. If all the infinitesimal motions are trivial in a framework $\mathcal{G}\left( p \right)$, $\mathcal{G}\left( p \right)$ is called {\em infinitesimally heterogeneous sensing rigid}. It is called {\em minimally infinitesimally heterogeneous sensing rigid} if it is infinitesimally rigid and has exactly $2n-2$ edges.

\subsection{Heterogeneous persistent formation} 

Since two different constraints are involved when the aforementioned heterogeneous sensing rigidity framework is used in the formation control, inconsistency issue may arise between these different constraints. As defined before, two edge sets $\mathcal E^\dist$ and $\mathcal{E}^{\bear}$ are used to embed the two different sensing information. Since we consider formation with heterogeneous sensing, there can be edges where the pairing agents use different sensing information so that each of them will maintain a different type of constraints; one of them uses distance constraint in its local control law and the other one employs bearing constraint in its local control law. In fact, the use of distance-bearing function $F_{DB}$ in \eqref{eq:bearing-distance_func} provides a formalism to embed such {\em heterogeneous constraint} coming from the two different edges $\mathcal{E}^\dist$ and $\mathcal{E}^{\bear}$. 

It is important to note that the notion of rigidity refers to an undirected graph, where two connected agents must maintain the same (distance or bearing-only) constraint. This notion can no longer be applicable when the two connected agents do not share the same constraint to maintain.  
  As an example, consider the scenario as shown in Figure 2(b). In this figure, agent $A_4$ can form an orbit around agent $A_2$ due to its distance constraint, and consequently it is not possible for agent $A_3$ to always maintain the given three distance constraints w.r.t. $A_1, A_2, A_4$. 
  In this case, a set of constraints $F_{DB}(p) = r^*$ for some desired vector of constraints $r^*\in\rline^{(d+1)m}$ does not admit a solution $p$.  
  In order to study such issues caused by the use of directed  graph in formation control, 
  the notion of persistent formation is introduced for the distance-based formation control [18], [19] and for the bearing-only formation [20]. Following the standard notion of persistent formation in these papers, we proceed to define heterogeneous sensing persistent formation as follows. At first, we will define the notion of 
{\em consistent heterogeneous constraint} $r^*$, which corresponds to the case when the desired constraint vector $r^*$ has an admissible solution $p^*$ to $F_{DB}^{-1}(r^*)=p^*$. 

Consider a two-layer directed graph $ \mathcal{G}=\left( \mathcal{V},\mathcal{E}^\dist \cup \mathcal{E}^{\bear} \right) $, a vector $ \bar{d}$ of desired distances $ d_{ij}^{*}>0$ with $ (i,j) \in \mathcal{E}^\dist$,  a vector $ \bar{g}$ of desired bearings $g_{ij}^{*}$ with $ (i,j) \in \mathcal{E}^{\bear}$, which is combined using $F_{DB}$ into a set of constraints $F_{DB}(p) = \sbm{\bar{d}\\ \bar{g}}$. A given configuration $p$ in a $2D$ plane needs to be checked whether it is compatible with the aforementioned set of constraints.  

Using $p$ and following the approach in \cite{yu2009control}, any edge $(i,j) \in \mathcal{E}^\dist$ is called {\it active} if $ \left\| p_i–p_j \right\| =d_{ij}^{*}$, i.e., the pair $p_i^*$ and $p_j^*$ satisfies the distance constraint assigned to the edge $(i,j)$. Similarly, any edge $(i,j) \in \mathcal{E}^{\bear}$ is called {\it active} if $ \frac{p_i-p_j}{\left\| p_i-p_j \right\|}=g_{ij}^{*}$, i.e., the pair $p_i$ and $p_j$ fulfills the bearing constraint assigned to the edge $(i,j)$.  Correspondingly, the position $p_i$ in the given configuration $p$ is said to be \textit{fitting position} if the number of active edges in $\mathcal{E}^\dist \cup \mathcal{E}^{\bear}$ involving $p_i$ cannot be increased by moving $p_i$ while keeping other vertices unchanged, e.g., there is no $\hat p_i\in \rline^2$ within a ball centered at $p_i$ such that 
\begin{align*}
& {\scriptstyle \{(i,j)\in\mathcal{E}^\dist|\|p_i-p_j\|=d_{ij}^*\}\cup\left\{(i,j)\in\mathcal{E}^{\bear}\, \Big|\, \frac{p_i-p_j}{\|p_i-p_j\|}=g_{ij}^*\right\}} \\ 
& {\scriptstyle \subset \{(i,j)\in\mathcal{E}^\dist|\|\hat p_i-p_j\|=d_{ij}^*\}\cup\left\{(i,j)\in\mathcal{E}^{\bear}\, \Big| \, \frac{ \hat p_i-p_j}{\|\hat p_i-p_j\|}=g_{ij}^*\right\}.} 
\end{align*} 
The configuration $p$ is called a {\it fitting configuration} of  $\mathcal{G}$ for $\bar{d}$ and $\bar{g}$ if all vertices $ \upsilon \in \mathcal{V}$ are fitting positions, i.e., $F_{DB}(p^*)=\sbm{\bar{d}\\ \bar{g}}$. 

A configuration $p$ is called \textit{heterogeneous constraint consistent} if there exists $ \epsilon >0 $ such that any fitting configuration $p^{\prime}$ with $ d\left( p,p^{\prime} \right) =\max_{i\in \mathcal{V}} \left\| p_i-p_{i}^{\prime} \right\| <\epsilon $ satisfies the distance constraint $\bar{d}$ and bearing constraint $\bar{g}$. The framework $\mathcal{G}(p)$ is called heterogeneous constraint consistent if  almost all of its configurations are heterogeneous constraint consistent. An example of a heterogeneous constraint consistent formation and non-consistent one are shown in Figure \ref{fig:consisten_and_nonconsistent}. 

{
\begin{figure*}
  \centering
  \subfigure[Heterogeneous constraint consistent]{\includegraphics[scale=0.6]{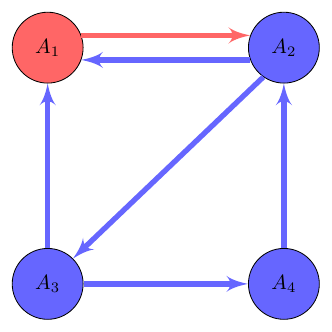}
  \label{fig:consisten_and_nonconsistent1}}
  \qquad 
  \subfigure[Heterogeneous constraint non-consistent]{\includegraphics[scale=0.6]{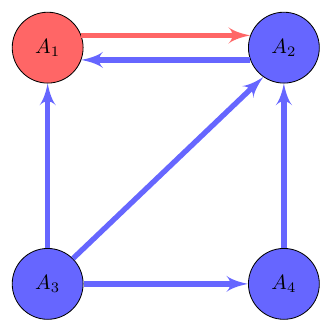}
  \label{fig:consisten_and_nonconsistent2}}
    \caption{The example of a heterogeneous constraint consistent and heterogeneous constraint non-consistent formation. In the non-consistent case (b), when the agents $1, 2$, and $3$ are each at the position which satisfies the geometrical constraints among themselves, agent $3$ is not able to simultaneously fulfill its third distance constraint w.r.t. agent $4$, as agent $4$ only needs to keep the required distance w.r.t. agent $2$. This situation does not occur in the consistent formation one in (a). }\label{fig:consisten_and_nonconsistent}
\end{figure*}
}

For the rest of the paper, we assume that the  desired formation satisfies both the heterogeneous rigid and heterogeneous constraint consistent conditions and we refer them simply as {\it heterogeneous persistent} formation.

\begin{defi}[heterogeneous persistent]
Consider a two-layer directed graph $\mathcal{G}=\left( \mathcal{V},\mathcal{E}^\dist \cup \mathcal{E}^{\bear} \right) $, a configuration $p$ of $\mathcal{G}$  
is called heterogeneous persistent if it is both infinitesimally heterogeneous sensing rigid and heterogeneous constraint consistent. The framework $\mathcal{G}(p)$ is called heterogeneous persistent if almost all of its configurations are heterogeneous persistent.
\end{defi}

\begin{defi}[minimally heterogeneous persistent]
A framework $\mathcal{G}\left( p \right)$ is minimally heterogeneous persistent, if it  
is both heterogeneous persistent and minimally infinitesimally heterogeneous sensing rigid.
\end{defi}

Theorem 2.3 in \cite{yu2009control} gives the relationship between the global Degrees of Freedom (DOFs) of the whole formation and the local DOFs of each agent. Based on this established relationship, we extend it to the heterogeneous sensing case. Correspondingly, it can be concluded that in a 2-D minimally heterogeneous persistent formation, each agent has to fulfill either none, one or two constraints (e.g., distance constraints or bearing constraints). The agent, which must fulfill two constraints, has zero DOF to move. Similarly, those with one constraint or zero constraint have one DOF or two DOFs,  respectively. To achieve a minimally heterogeneous persistent formation, it is necessary that the global DOFs of the whole formation are exactly two, representing two translation motions, which is equal to the sum of local DOFs of all agents in the formation. This can be achieved either with a single vertex having two DOFs or with two vertices, where each one has one DOF.

Consequently, depending on the underlying graph topology with the associated DOFs per agent, the heterogeneous persistent formation may contain one-leader or two-coleaders in a leader-follower setting, but not more than that\footnote{We refer interested readers to \cite{yu2009control} on a leader-follower setup in distributed formation control problems.}. Roughly speaking, one-leader formation refers to the formation framework in which there is only an agent (so-called the leader) that has two DOFs (in the desired formation) and all other agents have zero DOF. Similarly, two-coleaders formation corresponds to the formation framework where there are only two agents with one DOFs and the rest has zero DOF. 

In the subsequent section, we present the problem of extending the formation by adding a new agent to the formation framework. Depending on how the new agent interacts with some of the agents in the existing formation, we can classify the merging problem into two different cases: 
\begin{description}
\item[C1.] {\bf Unilateral-connection merging.} In this case, the agents in the original formation can be measured by the merging agent unilaterally and the original graph is extended by one vertex, where newly added edges all start from the added vertex.
\item[C2.] {\bf Interconnection merging.} In this case, both the merging agent and some of the existing agents can measure each other. This means that in the extended graph, there is an edge that starts from the new vertex and another edge that ends at the new vertex.  
\end{description}

As a consequence of the foregoing discussion, a new agent can only be added to the existing heterogeneous persistent formation while preserving this property if it is connected to the leader or one of the co-leaders of the original graph in the case of C2.

\section{Heterogeneous Sensing Formation Control and Merging}
In this subsection, we study an agent is added to an existing formation of $n$ heterogeneous sensing agents. In this case, the $n$ heterogeneous sensing agents are able to form a rigid formation with local stability and we investigate methods to allow another agent merging into such stable $n$-agent system and forming an asymptotically stable $n+1$ rigid formation shape. It extends the well-known results in homogeneous sensing formation case. If we consider the distance-based agents only or  bearing-only agents only, our methodology is the same as the standard homogeneous sensing formation control with directed graph as in \cite{yu2009control}, \cite{zhao2015bearing_persistence}. Depending on how the newly added edges are connected to the merging agent, we present three different analysis below, namely, (i). the case when it merges into a stable formation with two coleaders; (ii). the case when it merges into a stable formation with one leader; and (iii). the case when it merges unilaterally to a stable formation (i.e. case C1 as before).

In the following, we will employ linearization to analyze the stability. To facilitate the analysis, let us recall some key facts and introduce some lemmas that will be useful later. Firstly we denote a pair of unit vectors as $v_i = \sbm{\cos \alpha_i \\ \sin \alpha_i}$ and $\bar v_i = \sbm{\sin \alpha_i \\ -\cos \alpha_i}$. These vectors exhibit following relationship: $v_i^\top \bar v_i = \bar v_i^\top v_i =0$, $\bar v_i^\top v_j = -\bar v_j^\top v_i = \sin(\alpha_i - \alpha_j)$. For each $v_i$, we can define a projection matrix $P_i = v_iv_i^\top$. \vspace{0.1cm}

\begin{lemma}\label{lemma2}
    For a linear combination of projection matrices $D = k_iP_{i}+k_jP_{j}$ with positive constants $k_i$ and $k_j$, if $\left| \sin (\alpha_i - \alpha_j)\right| \neq 0$ then 
    $\frac{1}{k_i k_j \sin^2(\alpha_i - \alpha_j)} (\bar v_i\bar v_i^\top + \bar v_i\bar v_j^\top)=:D^{-1}$ is the inverse matrix of $D$.
\end{lemma}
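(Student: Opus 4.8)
Abbreviate $s \triangleq \sin(\alpha_i-\alpha_j)$, and recall the relations listed just before the lemma: $v_i^\top\bar v_i = v_j^\top\bar v_j = 0$ and $\bar v_i^\top v_j = -\bar v_j^\top v_i = s$. The first step is to observe that the hypothesis $s\neq 0$ is exactly the statement that $v_i$ and $v_j$ are linearly independent, so that $D = k_i v_i v_i^\top + k_j v_j v_j^\top$ is symmetric positive definite: for $x\in\rline^2$ one has $x^\top D x = k_i(v_i^\top x)^2 + k_j(v_j^\top x)^2 \geq 0$, with equality only when $v_i^\top x = v_j^\top x = 0$, i.e.\ $x=0$. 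Hence $D$ is invertible, and it remains only to identify $D^{-1}$.

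I would verify the asserted formula by direct multiplication. Writing $M$ for the proposed inverse, expand $DM$ and use $v_i^\top\bar v_i = v_j^\top\bar v_j = 0$: the terms pairing $v_iv_i^\top$ with $\bar v_i\bar v_i^\top$ and $v_jv_j^\top$ with $\bar v_j\bar v_j^\top$ vanish, while the surviving cross terms simplify with $v_i^\top\bar v_j = -s$ and $v_j^\top\bar v_i = s$. After the scalar prefactor $1/(k_ik_j s^2)$ is accounted for, $DM$ collapses to $\tfrac{1}{s}\bigl(v_j\bar v_i^\top - v_i\bar v_j^\top\bigr)$, and the proof finishes by checking the elementary identity $v_j\bar v_i^\top - v_i\bar v_j^\top = s\,I_2$: on the diagonal this is the subtraction formula $\sin\alpha_i\cos\alpha_j-\cos\alpha_i\sin\alpha_j = s$, and the off-diagonal entries cancel. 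Equivalently, one may argue coordinate-free: $\{v_i,\bar v_i\}$ and $\{v_j,\bar v_j\}$ are orthonormal bases, so $v_iv_i^\top+\bar v_i\bar v_i^\top = v_jv_j^\top+\bar v_j\bar v_j^\top = I_2$, and combining these with the antisymmetric relation $\bar v_i^\top v_j = -\bar v_j^\top v_i$ yields the same conclusion.

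A cleaner alternative, which also explains where the closed form comes from, is the rank-one factorization $D = VV^\top$ with $V \triangleq \bigl[\sqrt{k_i}\,v_i\ \ \sqrt{k_j}\,v_j\bigr]\in\rline^{2\times 2}$; the matrix $V$ is invertible precisely when $s\neq 0$, so $D^{-1} = (V^{-1})^\top V^{-1}$. The rows of $V^{-1}$ are the dual basis of $\{v_i,v_j\}$ rescaled by $1/\sqrt{k_i}$ and $1/\sqrt{k_j}$; since the dual vector of $v_i$ is orthogonal to $v_j$ (hence proportional to $\bar v_j$) and normalized via $\bar v_j^\top v_i = -s$, it equals $-\bar v_j/s$, and similarly the dual of $v_j$ is $\bar v_i/s$. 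Substituting gives $D^{-1} = \tfrac{1}{s^2}\bigl(\tfrac{1}{k_i}\bar v_j\bar v_j^\top + \tfrac{1}{k_j}\bar v_i\bar v_i^\top\bigr) = \tfrac{1}{k_ik_j s^2}\bigl(k_i\bar v_i\bar v_i^\top + k_j\bar v_j\bar v_j^\top\bigr)$.

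The computations are routine throughout; the only real care needed is in the sign bookkeeping, since $\bar v_i^\top v_j = -\bar v_j^\top v_i$ is antisymmetric and a dropped sign would produce $-I_2$ instead of $I_2$. I would also flag that the right-hand side as printed, $\bar v_i\bar v_i^\top + \bar v_i\bar v_j^\top = \bar v_i(\bar v_i+\bar v_j)^\top$, is rank one and therefore cannot literally equal the inverse of the positive-definite $D$; the object that the argument above actually certifies is the symmetric rank-two matrix $\tfrac{1}{k_ik_j s^2}\bigl(k_i\bar v_i\bar v_i^\top + k_j\bar v_j\bar v_j^\top\bigr)$, which is presumably what is intended and the version I would verify.
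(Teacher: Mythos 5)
The paper states Lemma~\ref{lemma2} without any proof, so there is nothing to compare your argument against; on its own merits, your verification is correct and complete. Both of your routes check out: the direct multiplication $DM$ does collapse to $\tfrac{1}{s}\left(v_j\bar v_i^\top - v_i\bar v_j^\top\right)$ once the orthogonality relations $v_i^\top\bar v_i = v_j^\top\bar v_j = 0$ kill the diagonal pairings, and the identity $v_j\bar v_i^\top - v_i\bar v_j^\top = \sin(\alpha_i-\alpha_j)\,I_2$ is exactly the sine subtraction formula on the diagonal with cancelling off-diagonal entries; the factorization $D=VV^\top$ with $V=\left[\sqrt{k_i}\,v_i\ \ \sqrt{k_j}\,v_j\right]$ gives the same answer and, as you say, explains where the closed form comes from (equivalently, one can note that for $2\times 2$ matrices the adjugate is linear, $\mathrm{adj}(P_i)=\bar v_i\bar v_i^\top$, and $\det D = k_ik_j\sin^2(\alpha_i-\alpha_j)$). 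Your most valuable observation is the last one: the right-hand side as printed, $\bar v_i\bar v_i^\top + \bar v_i\bar v_j^\top = \bar v_i(\bar v_i+\bar v_j)^\top$, is rank one and cannot invert the positive definite $D$; the correct expression is $\tfrac{1}{k_ik_j\sin^2(\alpha_i-\alpha_j)}\left(k_i\bar v_i\bar v_i^\top + k_j\bar v_j\bar v_j^\top\right)$, with the gain weights inside the parenthesis. You should be aware that this misprint propagates into the paper's proof of Proposition~\ref{pro:two_coleaders}, where $D^{-1}$ is written as $\tfrac{1}{k_lk_{l+1}\sin^2\theta_1}(\bar v_l\bar v_l^\top + \bar v_{l+1}\bar v_{l+1}^\top)$ — again missing the weights $k_l,k_{l+1}$, which matters there only up to an overall factor because the proposition assumes $k_l=k_{l+1}$, but the subsequent formula for $\lambda_{\max}(D^{-1})$ inherits the discrepancy.
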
\vspace{0.1cm}

\begin{lemma}\label{lemma3}
    For a given complete graph $\mathcal{G} = \left( \mathcal{V},\mathcal{E} \right)$ and a linear combination of projection matrices $P = \sum_{i=1}^{n} k_iP_i$ with positive constants $k_i, i \in \{1,\cdots,n\}$, the eigenvalues of $P$ are given by
    \begin{equation}\nonumber
        \lambda_{1,2}={\scriptstyle \frac{\sum_{i=1}^{n} k_i}{2}\pm  \sqrt{\frac{\left( \sum_{i=1}^{n} k_i \right) ^2}{4}-\sum_{\{l \, | \, (i,j)=\mathcal E_l \}} k_ik_j\sin^2(\alpha_i - \alpha_j)}}
    \end{equation}
     where $l \in \{1, \cdots, \left| \mathcal{E}\right|\}$ and $\mathcal E_l$ represents edge $l$ in $\mathcal{G}$.
\end{lemma}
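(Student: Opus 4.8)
The plan is to compute the eigenvalues of $P = \sum_{i=1}^n k_i P_i$ directly, exploiting the fact that $P$ is a $2\times 2$ symmetric positive semi-definite matrix, so its two eigenvalues are completely determined by its trace and determinant via $\lambda_{1,2} = \tfrac{1}{2}\big(\mathbf{trace}(P) \pm \sqrt{\mathbf{trace}(P)^2 - 4\det(P)}\big)$. Comparing this with the claimed formula, it suffices to establish the two identities $\mathbf{trace}(P) = \sum_{i=1}^n k_i$ and $\det(P) = \sum_{\{l \,|\, (i,j) = \mathcal E_l\}} k_i k_j \sin^2(\alpha_i - \alpha_j)$, where the sum is over all edges of the complete graph, i.e. over all unordered pairs $\{i,j\}$.

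First I would handle the trace. Since $P_i = v_i v_i^\top$ with $v_i$ a unit vector, $\mathbf{trace}(P_i) = v_i^\top v_i = 1$, and by linearity $\mathbf{trace}(P) = \sum_{i=1}^n k_i$, which gives the first identity immediately and matches the $\sum_{i=1}^n k_i/2$ term outside the square root as well as the $(\sum k_i)^2/4$ term inside.

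The main work is the determinant. I would write $P$ explicitly in coordinates: using $v_i = \sbm{\cos\alpha_i \\ \sin\alpha_i}$,
\begin{equation}\nonumber
P = \bbm{\sum_i k_i \cos^2\alpha_i & \sum_i k_i \cos\alpha_i \sin\alpha_i \\ \sum_i k_i \cos\alpha_i \sin\alpha_i & \sum_i k_i \sin^2\alpha_i},
\end{equation}
so that $\det(P) = \big(\sum_i k_i \cos^2\alpha_i\big)\big(\sum_j k_j \sin^2\alpha_j\big) - \big(\sum_i k_i \cos\alpha_i \sin\alpha_i\big)^2$. Expanding both products as double sums over $i,j$ and grouping, the diagonal terms $i=j$ cancel, and each off-diagonal pair $\{i,j\}$ with $i\neq j$ contributes $k_i k_j\big(\cos^2\alpha_i \sin^2\alpha_j + \cos^2\alpha_j \sin^2\alpha_i - 2\cos\alpha_i\sin\alpha_i\cos\alpha_j\sin\alpha_j\big) = k_i k_j(\cos\alpha_i\sin\alpha_j - \cos\alpha_j\sin\alpha_i)^2 = k_i k_j \sin^2(\alpha_j - \alpha_i)$. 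Summing over unordered pairs, i.e. over the edges $\mathcal E_l$ of the complete graph, yields $\det(P) = \sum_{\{l\,|\,(i,j)=\mathcal E_l\}} k_i k_j \sin^2(\alpha_i-\alpha_j)$. (Equivalently, one may observe that this Gram-type determinant equals $\tfrac12\sum_{i,j} k_i k_j(v_i^\top \bar v_j)^2$ using $v_i^\top \bar v_j = \sin(\alpha_i-\alpha_j)$ from the stated relations, which is a slightly slicker route to the same conclusion.) Substituting the two identities into the $2\times 2$ eigenvalue formula gives the claim.

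The only mild obstacle is the bookkeeping in the determinant expansion — keeping track of the double sum, isolating the cancellation of the $i=j$ terms, and recognizing the perfect-square factorization $\cos^2\alpha_i\sin^2\alpha_j + \cos^2\alpha_j\sin^2\alpha_i - 2\cos\alpha_i\sin\alpha_i\cos\alpha_j\sin\alpha_j = \sin^2(\alpha_i-\alpha_j)$ — but this is a routine algebraic identity and carries no conceptual difficulty. I would also note in passing that positivity of the $k_i$ guarantees $\mathbf{trace}(P)^2 - 4\det(P) \geq 0$ (since $P$ is symmetric and hence has real eigenvalues), so the square root is well defined.
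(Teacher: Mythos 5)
Your proof is correct: since $P$ is a real symmetric $2\times 2$ matrix, its eigenvalues are determined by its trace and determinant, and your computations $\mathbf{trace}(P)=\sum_{i=1}^n k_i$ and $\det(P)=\sum_{\{l\,|\,(i,j)=\mathcal E_l\}}k_ik_j\sin^2(\alpha_i-\alpha_j)$ (the diagonal terms cancelling and each unordered pair contributing a perfect square) check out, as does the slicker reformulation via $v_i^\top\bar v_j=\sin(\alpha_j-\alpha_i)$. The paper states this lemma without proof, so there is nothing to compare against, but your trace--determinant argument is the natural one and is consistent with how the lemma is invoked later (e.g.\ in computing $\lambda_{\max}(D^{-1})$ and $\lambda_{\max}(B+C)$ in Proposition \ref{pro:two_coleaders}). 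One tiny remark: the non-negativity of the discriminant follows from symmetry of $P$ alone (it equals $(P_{11}-P_{22})^2+4P_{12}^2$), not from positivity of the $k_i$; your parenthetical already gives the right reason.
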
\vspace{0.1cm}

Consider an autonomous system
\begin{equation}\label{system_lemma1}
    \begin{aligned}
        \dot x &= -Ax + \sbm{0 \quad \cdots \quad 0 \quad B}^\top y \\
        \dot y &= \sbm{0 \quad \cdots \quad 0 \quad C}x - Dy
    \end{aligned}
\end{equation}
where $x$ is in $\mathbb{R}^n$, y is in $\mathbb{R}^2$, $B,C,D \in \mathbb{R}^{2 \times 2}$ and are real symmetric matrices. In addition, $A+A^\top$ and $D$ are psitive definite.

\begin{lemma}\label{lemma4}
    System \eqref{system_lemma1} is exponential stable if $2\lambda_{\min}(A+A^\top) > \lambda^2_{\max}(B+C)\lambda_{\max}(D^{-1})$.
\end{lemma}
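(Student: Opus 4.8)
The plan is to construct a Lyapunov function for the cascade-like system \eqref{system_lemma1} and to show that the assumed gain condition makes its derivative negative definite. Since $A+A^\top$ and $D$ are positive definite, the natural candidate is the sum of quadratic forms $W(x,y) = x^\top x + y^\top y$, or more flexibly $W(x,y) = x^\top x + \gamma\, y^\top y$ for a parameter $\gamma>0$ to be tuned. Along trajectories of \eqref{system_lemma1}, differentiating gives
\begin{equation}\nonumber
\dot W = -x^\top (A+A^\top) x + 2 x^\top \sbm{0 \cdots 0 \ B}^\top y + 2\gamma\, y^\top \sbm{0 \cdots 0 \ C} x - 2\gamma\, y^\top D y.
\end{equation}
Writing $x = \sbm{x_1^\top & x_2^\top}^\top$ with $x_2\in\rline^2$ the last two coordinates, the cross terms collapse to $2 x_2^\top B y + 2\gamma\, y^\top C x_2 = 2\, x_2^\top (B + \gamma C) y$ using that $B,C$ are symmetric; choosing $\gamma = 1$ this is $x_2^\top(B+C)y + y^\top(B+C)x_2$. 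So only the last block $x_2$ of $x$ couples to $y$, which is exactly why the lemma's hypothesis only involves $B+C$ and not $B,C$ separately.

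Next I would bound the indefinite terms. The quadratic form $-x^\top(A+A^\top)x \le -\lambda_{\min}(A+A^\top)\|x\|^2 \le -\lambda_{\min}(A+A^\top)\|x_2\|^2$ — but here one must be careful: I actually need to keep enough of the $-\lambda_{\min}(A+A^\top)\|x\|^2$ term to dominate the full $x$, so I would split it, using a fraction to absorb the cross term against $x_2$ and retaining the remainder to control all of $x$. Concretely, by Young's inequality, $2 x_2^\top(B+C)y \le \varepsilon \|x_2\|^2 + \varepsilon^{-1}\|(B+C)y\|^2 \le \varepsilon\|x_2\|^2 + \varepsilon^{-1}\lambda_{\max}^2(B+C)\|y\|^2$ for any $\varepsilon>0$, and on the $y$-side $-2 y^\top D y \le -2\lambda_{\min}(D)\|y\|^2 = -2\lambda_{\max}^{-1}(D^{-1})\|y\|^2$ since $D$ is symmetric positive definite. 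Collecting terms, $\dot W \le -(\lambda_{\min}(A+A^\top)-\varepsilon)\|x\|^2 - \bigl(2\lambda_{\max}^{-1}(D^{-1}) - \varepsilon^{-1}\lambda_{\max}^2(B+C)\bigr)\|y\|^2$. Picking $\varepsilon = \lambda_{\min}(A+A^\top)$ (or any value strictly between the cross-term requirement and $\lambda_{\min}(A+A^\top)$) makes the $\|x\|^2$ coefficient nonpositive; I would instead take $\varepsilon$ slightly smaller so the $\|x\|^2$ coefficient is strictly negative, and then the $\|y\|^2$ coefficient is strictly negative precisely when $2\lambda_{\max}^{-1}(D^{-1}) > \varepsilon^{-1}\lambda_{\max}^2(B+C)$, i.e. $2\varepsilon > \lambda_{\max}^2(B+C)\lambda_{\max}(D^{-1})$; choosing $\varepsilon \uparrow \lambda_{\min}(A+A^\top)$ shows this is feasible exactly under the stated hypothesis $2\lambda_{\min}(A+A^\top) > \lambda_{\max}^2(B+C)\lambda_{\max}(D^{-1})$. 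Strict negativity of $\dot W$ as a quadratic form in $(x,y)$ together with the quadratic bounds on $W$ yields exponential stability by the standard Lyapunov theorem.

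The main obstacle — and the step deserving the most care — is the simultaneous balancing: the cross term must be split off from $-x^\top(A+A^\top)x$ in a way that still leaves a negative definite remainder in the \emph{full} variable $x$ (not merely $x_2$), while the leftover is large enough to beat $\lambda_{\max}^2(B+C)\lambda_{\max}(D^{-1})$ on the $y$-side. This is why the hypothesis carries the factor $2$ and why one wants the Young's-inequality parameter $\varepsilon$ to approach $\lambda_{\min}(A+A^\top)$ from below; a looser split would give a strictly weaker sufficient condition. One should also double-check the degenerate cases, e.g. $B+C=0$, where the condition reduces to positive definiteness of $A+A^\top$ and $D$ and stability is immediate, and note that the weighting $\gamma=1$ is what makes $B$ and $C$ enter only through $B+C$; a different $\gamma$ would give $B+\gamma C$ and a less symmetric bound. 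Finally, I would remark that the last-block structure of the coupling matrices is used only to identify $x_2$, and the argument would go through verbatim if the coupling block sat anywhere in $x$.
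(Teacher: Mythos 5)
Your proposal is correct and follows essentially the same route as the paper: the same Lyapunov function $V=x^\top x+y^\top y$, the same observation that only the last two coordinates of $x$ couple to $y$ through the symmetric matrix $B+C$, and the same final inequality; the only difference is that you absorb the cross term by Young's inequality with a tunable parameter $\varepsilon$, whereas the paper tests positive definiteness of the block matrix $\sbm{\lambda_{\min}(A+A^\top)I_2 & -(B+C)\\ -(B+C) & 2D}$ via its Schur complement --- two equivalent ways of completing the square that yield the identical sufficient condition. If anything, your explicit care in splitting $-x^\top(A+A^\top)x$ so that a strictly negative multiple of the \emph{full} $\|x\|^2$ survives is a point the paper's own proof glosses over when it keeps only $-\lambda_{\min}(A+A^\top)\|\tilde{x}\|^2$.
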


\vspace{0.2cm}

\begin{proof}
We choose $V = x^\top x + y^\top y$ as a Lyapunov function. let us define $x = \bbm{\bar x \\ \Tilde{x}}$ where $\bar x \in \mathbb R^{n-2}$ and $\Tilde x \in \mathbb R^{2}$. We know that
\begin{equation}
\begin{aligned}
    \dot V &= -x^\top (A+A^\top) x + \Tilde x^\top (B+C) y \\ & \quad + y^\top (B+C) \Tilde x -2 y^\top D y\top \\ &\leq -\lambda_{\min}(A+A^\top) \|\Tilde x\|^2 + \Tilde x^\top (B+C) y \\ & \quad + y^\top (B+C) \Tilde x -2 y^\top D y\top
\end{aligned}
\end{equation}
It shows that $\dot V < 0$ for any $\sbm{x \\ y} \neq 0$, if the matrix $M = \left[\begin{smallmatrix}
 \lambda_{\min}(A+A^\top)I_2&  -(B+C)& \\ 
 -(B+C)&  2D&\\ 
\end{smallmatrix}\right]$ is positive definite. Since $D$ is positive and the Schur component of $M$ is $S_{M} = \lambda_{\min}(A+A^\top)I_2 - \frac{1}{2}(B+C)D^{-1}(B+C)$, we conclude $M$ is positive definite if and only if $S_M$ is positive definite. Then we introduce the following inequality,
\begin{equation}
\begin{aligned}
    &z^\top S_M z = \lambda_{\min}(A+A^\top) \| z\|^2 - z^\top (\frac{1}{2}(B+C)D^{-1}(B+C))z \\
    &\geqslant \lambda_{\min}(A+A^\top) \| z\|^2 - \frac{1}{2} \lambda^2_{\max}(B+C)\lambda_{\max}(D^{-1})\| z\|^2 
\end{aligned}
\end{equation}
It implies that $\forall z \neq 0$, $z^\top S_M z>0$ if $2\lambda_{\min}(A+A^\top) > \lambda^2_{\max}(B+C)\lambda_{\max}(D^{-1})$. We prove the claim.
\end{proof}

\subsubsection{Interconnection merging into a stable formation with two coleaders}
\begin{figure}
    \centering
    \includegraphics[width=0.6 \linewidth]{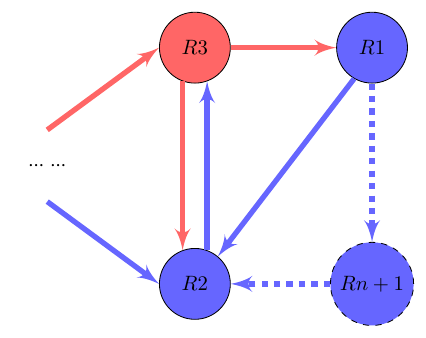}
    \caption{Interconnection merging into a stable formation with two coleaders; \protect\tikz{\protect\draw[fill = blue!60!white] (0, 0) circle [radius = 4pt];} represents a distance-based agent and \protect\tikz{\protect\draw[fill = red!60!white] (0, 0) circle [radius = 4pt];} represents a bearing-only agent; dashed circle \protect\tikz{\protect\draw[dashed, fill = blue!60!white] (0, 0) circle [radius = 4pt];} represents a merging distance-based agent. Correspondingly, blue dashed arrows are added sensing carried out by the distance-based agents.}
    \label{fig:two_leader}
\end{figure}

Let us first focus on the case where the new agent merge to the formation of $n$-agent systems by establishing edges to two co-leader agents that are connected by a link. The two agents in the original network and the newly added agent can be equipped by either distance-based or bearing-only sensing mechanism, and each of them uses correspondingly the distance-based or bearing-only gradient control law, respectively. As an example, we consider the $n$-agent systems as shown in Fig. \ref{fig:two_leader} where the robot $R(n+1)$ will merge to the formation by establishing two edges to the coleader robots $R1$ and $R2$ that are already connected by the link $R2\to R1$. 

When the new-added agent is distance-based robot, following the formation graph in Fig. \ref{fig:two_leader}, one can obtain that the closed-loop dynamics of $(n+1)$-agent systems is given by  
\begin{equation}\label{eq:Rn+1_distance_two_leaders}
    \left[ \begin{array}{c}
	\dot{p}_1\\
	\dot{p}_2\\
	\vdots\\
	\dot{p}_{n+1}\\
\end{array} \right] =\left[ \begin{array}{c}
	 k_1\frac{e_{1}^\dist z_{12}}{2\left\| z_{12}^{*} \right\|^2} + k_\ell\frac{e_{l}^\dist z_{1,n+1}}{2\left\| z_{1,n+1}^{*} \right\|^2}\\
	 k_2\frac{e_{2}^\dist z_{23}}{2\left\| z_{23}^{*} \right\|^2}\\
	 \vdots\\
	 k_{\ell+1}\frac{e_{l+1}^\dist z_{n+1,2}}{2\left\| z_{n+1,2}^{*} \right\|^2}\\
\end{array} \right] 
\end{equation}
\noindent where $k_1>0$, $k_2>0$, $k_\ell>0$ and $k_{\ell+1}>0$ are the controller gains. Similarly, when the $R(n+1)$ robot is a bearing-only one, the control input of $R(n+1)$ will be bearing-only in \eqref{eq:bearing_gradient_control}. One can obtain all other cases when the robots $R1$ and $R2$ use different sensing mechanism, and when the link direction is reversed. 

In the following proposition, we show that irrespective of the type of sensing mechanisms deployed in the two linked co-leader agents from the original network and the sensor systems deployed by the new agent $R(n+1)$, we can design the controller gains such that the enlarged network can maintain a new formation shape. Without loss of generality, we denote the two coleader agents in the original network by $R1$ and $R2$. 

\begin{pro}\label{pro:two_coleaders}
Consider the interconnection merging of robot $R(n+1)$ into a stable rigid formation of $n$ heterogeneous agents with the given gains $k_1,\ldots,k_{\ell+1}>0$ by adding two links from/to two co-leader agents $R1$ and $R2$ with desired distance $d^*_{1,n+1}$, $d^*_{2,n+1}$ and/or desired bearing $g^*_{1,n+1}$, $g^*_{2,n+1}$. Let $k_\ell$ and $k_{\ell+1}$ be the control gains of the local gradient-based control law with $k_\ell$ be the gain used by the agent from the original network and $k_{\ell+1}$ be the gain used by the newly added agent. For any given $k_l = k_{l+1} > 0$, there exists $k_\ell^* > 0$ such that for all $k_l = k_{l+1} > k_\ell^*$ the newly formed $n+1$ minimally heterogeneous persistent formation system achieves local exponential stability.
\end{pro}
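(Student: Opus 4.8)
The plan is to prove local exponential stability by Lyapunov's indirect method: linearize the closed-loop $(n+1)$-agent dynamics about the desired heterogeneous-persistent configuration $p^*$ and show the linearization is Hurwitz on the complement of the $2$-dimensional translation subspace, which is all of the infinitesimal-motion space because the merged framework is, by assumption, minimally heterogeneous persistent. First I would linearize: evaluating Jacobians at $p^*$, a distance term $k_\ell e_\ell^{\dist}\frac{z_\ell}{2\|z_\ell^*\|^2}$ contributes (up to sign) $k_\ell\,g_\ell^*{g_\ell^*}^{\top}$ and a bearing term $k_\ell\|z_\ell^*\|e_\ell^{\bear}$ contributes $k_\ell\,(I_2-g_\ell^*{g_\ell^*}^{\top})$; in both cases this is a positive multiple of a rank-one orthogonal projection, i.e. a term $k_\ell v_\ell v_\ell^\top$ in the notation preceding Lemma~\ref{lemma2} (with $v_\ell=g_\ell^*$ on a distance edge and $v_\ell$ its $90^\circ$ rotate on a bearing edge), signed so that each agent's own diagonal block is negative semidefinite. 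Hence the linearized matrix has one and the same block pattern for every assignment of the two sensing types to $R1$, $R2$ and $R(n+1)$, and Lemmas~\ref{lemma2} and~\ref{lemma3} apply to all its diagonal blocks.

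Next I would reduce the problem. Every pure follower of the original network receives no edge from $R(n+1)$, so by exponential stability of the $n$-agent formation and a standard cascade/ISS argument it suffices to analyse the low-order subsystem made of the two co-leaders $R1$, $R2$, the new agent $R(n+1)$ and the two new edges; this subsystem is a genuine feedback interconnection, since $R(n+1)$ relays the displacement of one co-leader to the other. I would cast it in the form \eqref{system_lemma1} via a coordinate change that, in these three agents, separates the displacement components pinned by the $k_\ell$- and $k_{\ell+1}$-gain edges from the residual ones: let $y\in\rline^2$ carry the components on which the two new edges act, so its block is $D=k_\ell v_\ell v_\ell^\top+k_{\ell+1}v_{\ell+1}v_{\ell+1}^\top$ — invertible by Lemma~\ref{lemma2} precisely because $R(n+1)$ and its two anchors are not colinear, which is exactly what infinitesimal heterogeneous-sensing rigidity of the merged framework guarantees — and let $x$ carry everything else, with $-A$ the (symmetrized) exponentially stable original dynamics plus the stiffness the new edge adds at the co-leader, and $B,C$ the residual couplings, each a symmetric $2\times2$ matrix of the form $k\,vv^\top$ by the previous step. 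Since $A+A^\top$ is positive definite (inherited from the coercivity of the stable original formation), Lemma~\ref{lemma4} then gives exponential stability once $2\lambda_{\min}(A+A^\top)>\lambda_{\max}^2(B+C)\lambda_{\max}(D^{-1})$.

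It remains to make this inequality hold for $k_\ell=k_{\ell+1}=k$: by Lemma~\ref{lemma3}, $\lambda_{\min}(D)=k\bigl(1-|\cos(\alpha_\ell-\alpha_{\ell+1})|\bigr)$ grows linearly in $k$, so $\lambda_{\max}(D^{-1})=O(1/k)$, while the parts of $B,C$ proportional to $k$ live on the same two edges that define $D$ and are absorbed by the Schur-complement step in the proof of Lemma~\ref{lemma4}, leaving only $k$-independent cross-coupling, and $\lambda_{\min}(A+A^\top)$ is bounded below by a constant fixed by the original formation; the inequality therefore holds for all $k$ past some threshold $k_\ell^*$, and the uniform projection structure of the first step makes the estimate valid for all sensing-type combinations, as claimed. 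The step I expect to be the main obstacle is the reduction: producing a coordinate split in which the reduced \emph{feedback} interconnection truly matches the hypotheses of Lemma~\ref{lemma4}, because the rank-one self-couplings of the individual agents force the split to mix the new agent's and the co-leaders' coordinates, and it must be done so that $D$ is positive definite (where non-colinearity, hence minimal heterogeneous persistence, is essential) and so that raising $k$ makes the gain condition easier rather than harder — i.e. so that all $k$-proportional self-coupling ends up inside $D$ and only bounded cross-coupling in $B+C$.
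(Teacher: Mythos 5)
Your overall strategy coincides with the paper's: linearize in relative-position (edge) coordinates, note that every distance or bearing edge contributes a term $k_\ell v_\ell v_\ell^\top$ with $v_\ell$ equal to $g_\ell^*$ or its $90^\circ$ rotation (which is why all sensing-type combinations are handled uniformly), put the result in the form \eqref{system_lemma1}, invoke Lemma~\ref{lemma4}, and then push $k_\ell=k_{\ell+1}$ large. However, two steps do not hold up. First, the reduction: you invoke a ``standard cascade/ISS argument'' to restrict attention to $\{R1,R2,R(n+1)\}$, but this interconnection is not a cascade. In \eqref{eq:Rn+1_distance_two_leaders} the co-leader $R2$ maintains its own constraint with a follower ($\dot p_2$ depends on $z_{23}$), so in edge coordinates the loop $\hat z_{1,n+1}\to\hat z_{12}\to\hat z_{1,n+1}$ is shunted through the internal dynamics of the \emph{entire} original network. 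The paper does not reduce: it keeps the whole original edge vector as the $x$-block with its Hurwitz $-A$, uses the fact that the new edge couples only to the single component $\hat z_{12}$ (the sparsity pattern assumed in \eqref{system_lemma1}), and lets the original network enter the gain condition through $\lambda_{\min}(A+A^\top)$.

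Second, and more seriously, the closing estimate. In the actual decomposition $B=-k_\ell P_\ell$ and $C=k_{\ell+1}P_{\ell+1}-k_1P_1$: the cross-couplings between $\hat z_{12}$ and $\hat z_{1,n+1}$ are themselves proportional to the new gains, because the two new edges share agents with the old edge $z_{12}$, and they cannot be relocated into $D$, which is only the self-coupling of $\hat z_{1,n+1}$. Hence with $k:=k_\ell=k_{\ell+1}$ the matrix $B+C$ contains $k(P_{\ell+1}-P_\ell)$, whose norm is $k\left|\sin(\alpha_\ell-\alpha_{\ell+1})\right|\neq 0$, so $\lambda_{\max}(B+C)$ grows linearly in $k$; your claim that the $k$-proportional parts are ``absorbed by the Schur-complement step, leaving only $k$-independent cross-coupling'' is false, since Lemma~\ref{lemma4} uses $\lambda_{\max}^2(B+C)$ with the full $B+C$. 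As a result the product $\lambda_{\max}^2(B+C)\lambda_{\max}(D^{-1})$ does not tend to zero as $k\to\infty$; in the paper's computation it tends to the finite limit $1+\sqrt{1-\sin^2\theta_1}$, which is precisely why the proof must impose the additional $k$-independent requirement \eqref{ineq_A}, namely $2\lambda_{\min}(A+A^\top)>1+\sqrt{1-\sin^2\theta_1}$ on the original formation (arranged by rescaling the original gains), before concluding via the monotonicity analysis of $f(\mu)$ with $\mu=k_1/k_\ell$ that the condition holds for all $k_\ell$ above a threshold. Your argument never surfaces this side condition, so the final claim ``the inequality holds for all $k$ past some threshold'' does not follow from the estimates you give.
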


\begin{proof}
In the following, we will prove the proposition for the case when $R(n+1)$ is a distance-based agent and the closed-loop system is given by \eqref{eq:Rn+1_distance_two_leaders}. For all other cases, the proof follows a similar fashion. 

In order to study the error dynamics, we analyse the dynamics of the edges that is given by the relative position vector $z$. For the original $n$-agent system, the closed-loop dynamics of relative position can be compactly described as 
\begin{equation} \label{relative_position_TL}
 \dot{z}=f\left( z \right) 
\end{equation}
where $ z=\left[ \begin{matrix}
	\cdots&		z_{ij}^{\top}&		\cdots\\
\end{matrix} \right] ^{\top}\in \mathbb{R}^{2\left( \left| \mathcal{V} \right|-1 \right)} $. Since by the hypothesis that the original $n$-agent forms an asymptotically stable rigid formation, the $z-$system above is locally asymptotically stable at an equilibrium point in $ \mathcal S=\left\{ z\in \mathbb{R}^{2\left( \left| \mathcal{V} \right|-1 \right)}|z=z^* \right\}$. This implies immediately that the linearization of \eqref{relative_position_TL} gives rise to 
\begin{equation}\label{z_hat_TL}
\dot{\hat{z}}=-A\hat{z}
\end{equation}
\noindent where $ \hat{z}=z-z^*$ and $-A$ is Hurwitz. On the rigidity property of the new configuration, the added two links to the new agent ensures that the enlarged formation remains infinitesimally heterogeneous sensing rigid. This follows from the Henneberg insertion rule for maintaining graph rigidity. 

Using \eqref{eq:Rn+1_distance_two_leaders}, we can analyze the relative position of $(n+1)$-agent system with state variables $z_m=\bbm{z^\top z_{1,n+1}^\top }^\top$, where $z$ is the relative position in the original formation framework.   

Because the new added robot $R(n+1)$ only influence the relative position $z_{21}$ in original $n$-agent system, we will focus on the interconnection that involves $\dot{z}_{12}$ and $\dot{z}_{1,n+1}$ in the analysis part below. Accordingly, we have the following dynamics of $z_m$ 
\begin{equation}\label{n+1_relative_position_TL}
\begin{aligned}
   \left[ \begin{array}{c}
	\vdots\\
	\dot{z}_{12}\\
	\dot{z}_{1,n+1}\\
\end{array} \right]
=\left[ \begin{array}{c}
	\vdots\\
	 k_2\frac{e_{2}^\dist z_{23}}{2\left\| z_{23}^{*} \right\|^2}-k_1\frac{e_{1}^\dist z_{12}}{2\left\| z_{12}^{*} \right\|^2} -k_\ell\frac{e_{l}^\dist z_{1,n+1}}{2\left\| z_{1,n+1}^{*} \right\|^2}\\
	 k_{\ell+1}\frac{e_{l+1}^\dist z_{n+1,2}}{2\left\| z_{n+1,2}^{*} \right\|^2}-k_1\frac{e_{1}^\dist z_{12}}{2\left\| z_{12}^{*} \right\|^2} -k_\ell\frac{e_{l}^\dist z_{1,n+1}}{2\left\| z_{1,n+1}^{*} \right\|^2}\\
\end{array} \right] 
\end{aligned}
\end{equation}
According to the edge relationship of triangle,  $z_{n+1,2}=z_{12}-z_{1,n+1}$. 
The linearization of the closed-loop dynamics \eqref{n+1_relative_position_TL} at $z_{m}^{*}$ yields
\begin{equation}\label{Linear_system_1}
\begin{aligned}
\dot{\hat{z}}_m &=\left[ \begin{array}{c}
	\dot{\hat{z}}\\
	\dot{\hat{z}}_{1,n+1}\\
\end{array} \right] 
= \left[ \begin{smallmatrix}
	-A&		 
	\begin{smallmatrix}
	0\\
	\vdots\\
	0\\
	B\\
\end{smallmatrix}\\
	\begin{smallmatrix}
	0&		\cdots&		0&		C\\
\end{smallmatrix}&		-D\\
\end{smallmatrix} \right] \left[ \begin{array}{c}
	\hat{z}\\
	\hat{z}_{1,n+1}\\
\end{array} \right],
\end{aligned}
\end{equation}
where $B = -k_\ell P_{l}$ , $C=k_{\ell+1}P_{l+1}-k_1P_{1}$ and $D= k_{\ell+1}P_{l+1}+k_\ell P_{l} $ with $P_{l}$, $P_{l+1}$, $P_{1}$ be the projection matrices as defined before Lemma \ref{lemma2} and $\alpha_1 = \angle z_{12}^{*}$, $\alpha_l = \angle z_{1,n+1}^{*}$, $\alpha_{l+1} = \angle z_{n+1,2}^{*}$. According to Lemma \ref{lemma4}, the system \eqref{Linear_system_1} is exponential stable if $2\lambda_{\min}(A+A^\top) > \lambda^2_{\max}(B+C)\lambda_{\max}(D^{-1})$. Based on Lemma \ref{lemma2}, $D^{-1}$ is given by
\begin{equation}
    D^{-1} = \frac{1}{k_lk_{l+1}\sin^2\theta_1}(\bar v_l \bar v_l^\top + \bar v_{l+1} \bar v_{l+1}^\top),
\end{equation}
where $\theta_1 = \alpha_l - \alpha_{l+1}$, $\bar v_l$ and $\bar v_{l+1}$ are defined before Lemma \ref{lemma2}. Additionally, Lemma \ref{lemma3} shows that
\begin{equation}
    \begin{aligned}
        &\lambda_{\max}(D^{-1}) =  \frac{1 + \sqrt{1 - \sin^2 \theta_1}}{k_lk_{l+1}\sin^2\theta_1} \\
        &\lambda_{\max}(B+C) = \frac{k_1+k_l-k_{l+1}}{2} + \\ & {\scriptstyle \sqrt{(\frac{k_1+k_l-k_{l+1}}{2})^2 +k_lk_{l+1}\sin^2 \theta_1 + k_1k_{l+1}\sin^2 \theta_2 - k_1k_l\sin^2 \theta_3}}
    \end{aligned}
\end{equation}
where $\theta_2 = \alpha_1 - \alpha_{l+1}$ and $\theta_3 = \alpha_1 - \alpha_l$. We know that $\sin\theta_1 \neq 0$, $\sin\theta_2 \neq 0$ and $\sin\theta_3 \neq 0$ because the desired positions of agents are noncollinear. By the hypothesis of the theorem, $k_l = k_{l+1}$. Accordingly,  $2\lambda_{\min}(A+A^\top) > \lambda^2_{\max}(B+C)\lambda_{\max}(D^{-1})$ can be given by
\begin{equation}
    2\lambda_{\min}(A+A^\top)(1 - \sqrt{1-\sin^2\theta_1})>f^2(\mu)
\end{equation}
where $\mu = \frac{k_1}{k_l}$ and $f(\mu) = \frac{\mu}{2} + \sqrt{(\frac{\mu}{2})^2 + \mu(\sin^2\theta_2 - \sin^2\theta_3) + \sin^2\theta_1}$. Since $\mu>0$, it follows that $f(\mu)$ is continuous, $f(\mu)>0$, $\lim_{\mu \rightarrow 0 } f(\mu) = \sin^2\theta_1$, $\lim_{\mu \rightarrow +\infty } f(\mu) = +\infty$ and $f^{'}(\mu) > 0$ when $\mu > 2(\sin^2\theta_3 - \sin^2\theta_2)$. Therefore, if 
\begin{equation}\label{ineq_A}
    \begin{aligned}
        &2\lambda_{\min}(A+A^\top)(1 - \sqrt{1-\sin^2\theta_1}) >f(0) \\
        &\Leftrightarrow \,\,\, 2\lambda_{\min}(A+A^\top) > 1 + \sqrt{1-\sin^2\theta_1}
    \end{aligned}
\end{equation}
we can conclude that there exists $\bar \mu >0$ such that  $2\lambda_{\min}(A+A^\top)(1 - \sqrt{1-\sin^2\theta_1}) >f(\mu)$ holds for all $\mu<\bar \mu$. We know the inequality \eqref{ineq_A} can be readily met by augmenting the gains within the existing system using the same multiplier. It shows that for all $k_l = k_{l+1} > \frac{k_1}{\bar \mu}$, we have that $2\lambda_{\min}(A+A^\top) > \lambda^2_{\max}(B+C)\lambda_{\max}(D^{-1})$. By the application of Lemma \ref{lemma4}, the augmented system \eqref{Linear_system_1} is exponentially stable. 

\end{proof}

\subsubsection{Interconnection merging into a stable formation with one leader}

\begin{figure}
    \centering
    \includegraphics[width=0.6 \linewidth]{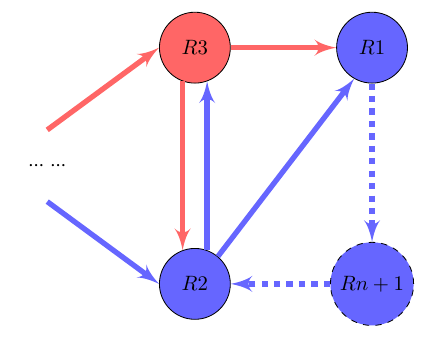}
    \caption{Interconnection merging into a stable formation with one leader}
    \label{fig:one_leader}
\end{figure}

Here we focus on the $n+1$ agents case in which $n$ heterogeneous sensing agents are able to form a one-leader minimally heterogeneous persistent formation with local stability while another agent is merging into such stable formation of $n$-agent systems. 
As an illustration, let us consider the case where two distance-based robots $R1$ and $R2$ are connected by a link with $R1$ be the leader of the formation. The merging robot $R(n+1)$ can be equipped with distance or bearing-only sensor system. Figure \ref{fig:one_leader} shows an example of the graph where the merging robot $R(n+1)$ can be detected by the leader $R1$ while the robot $R2$ can be detected by $R(n+1)$; hence they form an interconnected systems among themselves. 

When $R(n+1)$ is a distance-based robot, the closed-loop dynamics of $(n+1)$-agent system is given by 
\begin{equation}\label{n+1_one_leader_distance_based}
\left[ \begin{array}{c}
	\dot{p}_1\\
	\dot{p}_2\\
	\vdots\\
	\dot{p}_{n+1}\\
\end{array} \right] =\left[ \begin{array}{c}
	k_\ell \frac{e_{\ell}^\dist}{2\left\| z_{1,n+1}^{*} \right\|^2}z_{1,n+1}\\
	k_1\frac{e_{1}^\dist}{2\left\| z_{21}^{*} \right\|^2}z_{21}+k_2\frac{e_{2}^\dist}{2\left\| z_{23}^{*} \right\|^2}z_{23}\\
	\vdots\\
	k_{\ell+1}\frac{e_{\ell+1}^\dist}{2\left\| z_{n+1,2}^{*} \right\|^2}z_{n+1,2}\\
\end{array} \right], 
\end{equation}
where $k_1>0$, $k_2>0$, $k_\ell>0$ and $k_{\ell+1}>0$ are the controller gains. Similarly, if $R(n+1)$ is a bearing-only robot, the control input of $R(n+1)$ will be bearing-only in \eqref{eq:bearing_gradient_control}. For other cases when the robots $R1$ and $R2$ use different sensing mechanism, we can obtain similar expressions as above. 

In the following proposition, we show that irrespective of the type of sensing mechanisms deployed in the one leader agent from the original network and the sensor systems deployed by the new agent $R(n+1)$, we can design the controller gains such that the enlarged network can maintain a new formation shape. Without loss of generality, we denote the one leader agent in the original network by $R1$. 

\begin{pro}\label{pro:one_leader}
Consider the interconnection merging of robot $R(n+1)$ into a stable rigid formation of $n$ heterogeneous agents with the given gains $k_1,\ldots,k_{\ell-1}>0$ by adding one link from the leader agent $R1$ to $R(n+1)$ and another link from $R(n+1)$ to $R2$ with desired distance $d^*_{1,n+1}$, $d^*_{2,n+1}$ and/or desired bearing $g^*_{1,n+1}$, $g^*_{2,n+1}$. Let $k_\ell$ be the gain for the distributed control law on the link $R1 \to R(n+1)$ and $k_{\ell+1}$ be the gain for the distributed control law on the link $R(n+1)\to R2$. For any given $k_l = k_{\ell+1} > 0$, the newly formed $n+1$ minimally heterogeneous persistent formation system achieves local exponential stability.
\end{pro}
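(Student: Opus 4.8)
\medskip\noindent\emph{Proof plan.}\ The plan is to mirror the argument of Proposition~\ref{pro:two_coleaders}, but to exploit that here $R1$ is a \emph{leader} rather than a co-leader; this is precisely what removes the gain-enlargement requirement. Without loss of generality take $R(n+1)$ to be a distance-based agent, so that the closed loop is \eqref{n+1_one_leader_distance_based}; every other combination of sensing types on $R1$ and $R(n+1)$ only changes the projection matrices and scalars appearing below and is handled verbatim. On the rigidity side, the merge appends the single vertex $R(n+1)$ together with the two edges $(1,n+1)$ and $(n+1,2)$, which is a Henneberg insertion, so the $(n+1)$-agent framework is again infinitesimally heterogeneous sensing rigid; since $R1$ gains exactly one constraint, $R(n+1)$ gains exactly one constraint, and every remaining agent keeps two, the degrees-of-freedom bookkeeping of Theorem~2.3 in \cite{yu2009control} keeps the global DOF equal to two, with $R1$ and $R(n+1)$ as the two co-leaders, i.e.\ the merged framework is minimally heterogeneous persistent. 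It then remains to establish local exponential stability.

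I would next pass to relative-position coordinates $z_m=(z^\top,\,z_{1,n+1}^\top)^\top$, where $z$ collects the original edges and $z_{1,n+1}$ is the \emph{only} relative position added by the merge, the other new edge obeying the triangle identity $z_{n+1,2}=z_{12}-z_{1,n+1}$. By hypothesis the original $n$-agent relative-position dynamics $\dot z=f(z)$ are locally exponentially stable at $z^*$, so their linearization is $\dot{\hat z}=-A\hat z$ with $-A$ Hurwitz (and, exactly as used in Proposition~\ref{pro:two_coleaders}, with $A+A^\top$ positive definite on the relevant subspace, owing to the quadratic potentials defining the local laws). Linearizing \eqref{n+1_one_leader_distance_based} about $z_m^*$, using $\hat e_\ell^\dist=2z_\ell^{*\top}\hat z_\ell$ together with the triangle identity, puts the $(n+1)$-agent error dynamics exactly in the form \eqref{system_lemma1} with $x=\hat z$ (whose trailing $2$-block is $\hat z_{21}$), $y=\hat z_{1,n+1}$, and $B=k_\ell P_{l}$, $C=-k_{\ell+1}P_{l+1}$, $D=k_\ell P_{l}+k_{\ell+1}P_{l+1}$, where $P_{l},P_{l+1}$ are the projection matrices onto $z_{1,n+1}^*$ and $z_{n+1,2}^*$ and $\theta_1=\alpha_l-\alpha_{l+1}\neq 0$ since $R1,R2,R(n+1)$ are non-collinear in the desired shape. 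Lemmas~\ref{lemma2} and \ref{lemma3} then furnish $D^{-1}$, $\lambda_{\max}(D^{-1})$ and $\lambda_{\max}(B+C)$ in closed form.

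The final and only delicate step is to check the hypothesis $2\lambda_{\min}(A+A^\top)>\lambda^2_{\max}(B+C)\lambda_{\max}(D^{-1})$ of Lemma~\ref{lemma4} for \emph{every} common gain $k_\ell=k_{\ell+1}>0$ while the original gains $k_1,\dots,k_{\ell-1}$ stay fixed, i.e.\ without the gain enlargement needed in Proposition~\ref{pro:two_coleaders}. The reason this is possible is structural: because $R1$ is a leader whose only constraint is the distance to $R(n+1)$, its induced feedback into the original formation is $\dot{\hat p}_1=k_\ell P_{l}\hat z_{1,n+1}$, acting only along the radial direction $v_l=\angle z_{1,n+1}^*$, whereas the quasi-steady-state value of $\hat z_{1,n+1}$ forced by the formation error lies along the orthogonal tangential direction $\bar v_l$; concretely $P_{l}(P_{l}+P_{l+1})^{-1}P_{l+1}=0$, since $(P_{l}+P_{l+1})^{-1}v_{l+1}$ is parallel to $\bar v_l$ and $P_{l}\bar v_l=0$. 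Hence the feedback loop through the leader is nilpotent, the destabilizing cross term in $\dot V$ contributes an $O(1)$ rather than an $O(k)$ quantity, and the bound of Lemma~\ref{lemma4} is satisfied for all $k>0$. To make this a clean certificate I would split $y$ into its $v_l$- and $\bar v_l$-components $(\sigma,\tau)$, observe that $\dot{\hat z}$ is driven by $y$ only through the scalar $\sigma$ which has self-damping $-k_\ell\sigma$, and either eliminate $\sigma$ by a standard two-time-scale reduction or use a $k$-weighted Lyapunov function $V=x^\top x+c(k)\,y^\top y$ with $c(k)\to 0$; the effective interconnection gain then stays uniformly bounded in $k$, yielding exponential stability for every $k_\ell=k_{\ell+1}>0$. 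I expect this uniform-in-$k$ estimate to be the main obstacle; the rest is a routine repeat of Proposition~\ref{pro:two_coleaders}, and the remaining sensing cases ($R(n+1)$ bearing-only, or $R1$ bearing-only) only relabel $B,C,D$ and go through identically.
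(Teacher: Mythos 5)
Your main line of attack coincides with the paper's: pass to relative coordinates with $z_{n+1,2}=z_{12}-z_{1,n+1}$, linearize into the block form \eqref{system_lemma1} with $B,C,D$ built from the projection matrices $P_{l},P_{l+1}$, and invoke Lemma~\ref{lemma4}; that part is fine. The gap is in your final step, where you claim the leader structure lets the Lemma~\ref{lemma4} bound hold for \emph{every} $k_\ell=k_{\ell+1}>0$ with the original gains untouched. That is not what the paper establishes: after setting $k_\ell=k_{\ell+1}$, inequality \eqref{inequ_2} collapses to the $k$-independent but non-vacuous condition $2\lambda_{\min}(A+A^\top)>1+\sqrt{1-\sin^2\theta}$, a genuine requirement on the original formation, which the paper then meets by ``augmenting the gains within the existing system using the same multiplier'' --- exactly the gain enlargement you assert the leader configuration removes. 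What Proposition~\ref{pro:one_leader} dispenses with, relative to Proposition~\ref{pro:two_coleaders}, is only the lower threshold $k_\ell^*$ on the \emph{new} gains, not the residual condition on $A$.

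The structural argument you offer in its place does not close this. The identity $P_{l}(P_{l}+P_{l+1})^{-1}P_{l+1}=0$ is correct, but the quantity that enters the Schur complement of Lemma~\ref{lemma4} is $(B+C)D^{-1}(B+C)=BD^{-1}B+CD^{-1}C+BD^{-1}C+CD^{-1}B$; your identity annihilates only the two mixed terms, while $BD^{-1}B=k_\ell P_{l}$ and $CD^{-1}C=k_{\ell+1}P_{l+1}$ (since $P_{l}D^{-1}P_{l}=P_{l}/k_\ell$, etc.) survive, so $(B+C)D^{-1}(B+C)=D$ and the cross term in $\dot V$ is $O(k)$, not $O(1)$. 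Hence ``the bound of Lemma~\ref{lemma4} is satisfied for all $k>0$'' does not follow. Your fallback devices also do not repair this uniformly: the slow-manifold reduction (where $BD^{-1}C=0$ indeed gives $\dot{\hat z}=-A\hat z$) certifies only sufficiently large $k$, a regular perturbation off $k=0$ certifies only sufficiently small $k$, and the weighted function $V=x^\top x+c(k)y^\top y$ fails because shrinking $c$ inflates $\lambda_{\max}\bigl((cD)^{-1}\bigr)$ faster than it shrinks $\|B+cC\|$. To align with the paper's proof you should retain the residual inequality on $\lambda_{\min}(A+A^\top)$ explicitly and discharge it, as the paper does, by a common rescaling of the pre-existing gains.
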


\begin{proof}
In the following, we will prove the proposition for the case when $R(n+1)$ is a distance-based agent and the closed-loop system is given by \eqref{n+1_one_leader_distance_based}. For all other cases, the proof follows vis-\`a-vis. 

Following the same process in the proof of Proposition \ref{pro:two_coleaders}, the linearization of the closed-loop relative position dynamics at desired relative positions yields
\begin{equation}\label{linear_n+1_one_leader}
\begin{aligned}
\dot{\hat{z}}_m &=\left[ \begin{array}{c}
	\dot{\hat{z}}\\
	\dot{\hat{z}}_{1,n+1}\\
\end{array} \right] 
= \left[ \begin{smallmatrix}
	-A&		 
	\begin{smallmatrix}
	0\\
	\vdots\\
	0\\
	B\\
\end{smallmatrix}\\
	\begin{smallmatrix}
	0&		\cdots&		0&		C\\
\end{smallmatrix}&		-D\\
\end{smallmatrix} \right] \left[ \begin{array}{c}
	\hat{z}\\
	\hat{z}_{1,n+1}\\
\end{array} \right],
\end{aligned}
\end{equation}
where $B=-k_\ell P_l$, $C=k_{\ell+1} P_{\ell+1}$ and $D=\left(k_\ell P_{\ell} +k_{\ell+1} P_{\ell+1}\right)$ with the projection matrices $P_{\ell}$, $P_{\ell+1}$ and $\alpha_l = \angle z_{1,n+1}^{*}$, $\alpha_{l+1} = \angle z_{n+1,2}^{*}$. In addition, $D$ and $A+A^\top$ are both positive definite. 

Note that following Lemma \ref{lemma4}, the system \eqref{linear_n+1_one_leader} is exponential stable if $2\lambda_{\min}(A+A^\top) > \lambda^2_{\max}(B+C)\lambda_{\max}(D^{-1})$. Following the same computation as in the proof of Proposition \ref{pro:two_coleaders}, we can arrive at the following sufficient condition 
\begin{equation}\label{inequ_2}
\begin{aligned}
    &2\lambda_{\min}(A+A^\top) > \\
    &{\scriptstyle \frac{\left(\kll + \sqrt{(\kll)^2 + k_lk_{l+1}\sin^2\theta}\right)^2}{k_lk_{l+1}\sin^2\theta}}\left( 1 + \sqrt{1 - \sin^2\theta} \right),
\end{aligned}
\end{equation}
where $\theta = \alpha_l - \alpha_{l+1}$. By assumption,  $\sin(\alpha_l - \alpha_{l+1}) \neq 0$ because the desired positions of agents are noncollinear. Since $k_l = k_{l+1}$ by the hypothesis of the proposition, the inequality \eqref{inequ_2} becomes
\begin{equation}
    2\lambda_{\min}(A+A^\top) > 1 + \sqrt{1 - \sin^2\theta}.
\end{equation}
The preceding inequality can be readily met by augmenting the gains within the existing system using the same multiplier. We prove the claim.

\end{proof}
   
\subsubsection{Unilateral-connection merging into a stable formation}

Finally, we focus on the case C2 on the unilateral-connection merging for $n+1$ agents. Similar to the setup in the previous case, it is assumed that there are $n$ heterogeneous sensing agents which are able to form a minimally heterogeneous persistent formation with local stability. In contrast to the previous case, the new robot $R(n+1)$ will merge to such stable formation by establishing two links unilaterally to any agents in the $n$-agent systems. 

Without loss of generality, we label the two agents where robot $R(n+1)$ establishes the new links by $R1$ and $R2$, respectively. Figure \ref{fig:unilateral_connection} shown an example of the graph where the merging robot $R(n+1)$ equipped with distance sensor system can detect $R1$ and $R2$. In this case, when $R(n+1)$ is a distance-based robot, 
the dynamics of the closed-loop system is given by
\begin{equation}\label{n+1_unilateral_distance}
\left[ \begin{array}{c}
	\vdots\\
	\dot{p}_{n+1}\\
\end{array} \right] =\left[ \begin{array}{c}
	\vdots\\
	k_\ell \frac{e_{\ell}^\dist}{2\left\| z_{n+1,1}^{*} \right\|^2}z_{n+1,1}+ k_{\ell+1} \frac{e_{\ell+1}^\dist}{2\left\| z_{n+1,2}^{*} \right\|^2}z_{n+1,2}\\
\end{array} \right], 
\end{equation}
where $k_\ell>0$ and $k_{\ell+1}>0$ are the controller gains. Similarly, if $R(n+1)$ is a bearing-only robot, the control input of $R(n+1)$ will be bearing-only in \eqref{eq:bearing_gradient_control}.

The following proposition shows that independent of the type of sensing mechanism in the three robots $R1$, $R2$ and $R(n+1)$, we can design the controller gains $k_\ell$ and $k_{\ell+1}$ such that the new formation is stable.  

\begin{pro}\label{pro:unilateral_case}
   Consider the unilateral connection merging of robot $R(n+1)$ into a stable rigid formation of $n$ heterogeneous agents with the given gains $k_1,\ldots,k_{\ell-1}>0$ by adding two links to the agents $R1$ and $R2$, which are not collinear, with desired distance $d^*_{n+1,1}$, $d^*_{n+1,2}$ or desired bearing $g^*_{n+1,1}$, $g^*_{n+1,2}$. Let $k_\ell$, $k_{\ell+1}$ be the gains for the distributed control laws on the links $R(n+1)\to R1$ and $R(n+1)\to R2$, respectively. For any given $k_{\ell}, k_{\ell+1}>0$, the newly formed $n+1$ minimally heterogeneous persistent formation system achieves local exponential stability.
\end{pro}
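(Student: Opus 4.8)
The plan is to exploit the fact that unilateral merging turns the closed loop into a \emph{cascade}. Since both new edges are outgoing from $R(n+1)$, they add no term to $\dot p_1,\dots,\dot p_n$, so the relative-position dynamics of the original framework are literally unchanged: $\dot z=f(z)$ as in \eqref{relative_position_TL}, which by hypothesis is locally exponentially stable at $z^*$ with Hurwitz linearization $\dot{\hat z}=-A\hat z$. First I would introduce the enlarged relative-position state $z_m=\bbm{z^\top & z_{n+1,1}^\top}^\top$, noting that $R(n+1)$ contributes only two new coordinates because $z_{n+1,2}=z_{n+1,1}-z_{12}$ is already determined by $z_m$.

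Next I would linearize the dynamics of $z_{n+1,1}$ (equivalently $\dot z_{n+1,1}=\dot p_1-\dot p_{n+1}$), using \eqref{n+1_unilateral_distance}, about $z_m^*$. Carrying out the same computation as for \eqref{Linear_system_1} in the proof of Proposition \ref{pro:two_coleaders} — each distance or bearing edge linearizing to a scaled projection matrix onto its desired direction — the row for $z_{n+1,1}$ becomes $\dot{\hat z}_{n+1,1}=G\,\hat z-D\,\hat z_{n+1,1}$, where $D=k_\ell P_\ell+k_{\ell+1}P_{\ell+1}$ with $\alpha_\ell=\angle z_{n+1,1}^*$, $\alpha_{\ell+1}=\angle z_{n+1,2}^*$, and $G$ is the coupling induced through $\dot p_1$ and through the $z_{12}$-dependence of $\dot p_{n+1}$. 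The full linearization is then block lower-triangular, with $-A$ and $-D$ on the diagonal and a genuine zero block in the upper-right corner; that zero block is the payoff of the one-directional coupling and is precisely what distinguishes this case from the two interconnection ones, where Lemma \ref{lemma4} was needed to absorb the cross term.

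To conclude, I would invoke Lemma \ref{lemma3} with two edges: since $R1$, $R2$, $R(n+1)$ are not collinear, $\sin(\alpha_\ell-\alpha_{\ell+1})\neq0$, so both eigenvalues of $D$ are strictly positive, i.e. $D$ is positive definite and $-D$ is Hurwitz. A block-triangular matrix has spectrum equal to the union of the spectra of its diagonal blocks, so the linearization is Hurwitz, and Lyapunov's indirect method yields local exponential stability of the nonlinear $(n+1)$-agent closed loop — with no lower bound on $k_\ell$ or $k_{\ell+1}$, which is exactly why the statement holds for all positive gains. I would also record that attaching a vertex by two outgoing edges to the two non-collinear anchors is a Henneberg-type insertion preserving infinitesimal heterogeneous sensing rigidity and adding exactly two edges, so the enlarged framework is again minimally heterogeneous persistent, and that the bearing-only and mixed cases go through verbatim since $D$ is always a sum of two scaled projections onto distinct directions. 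The nearest thing to an obstacle is making the cascade argument airtight — checking that the coupling is genuinely one-directional so the block-triangular structure is exact, and that non-collinearity renders $D$ positive definite; everything else is immediate from results already established, and, notably, Lemma \ref{lemma4} is not even needed here.
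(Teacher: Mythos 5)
Your proposal is correct and follows essentially the same route as the paper: both exploit the fact that the two new edges are outgoing from $R(n+1)$, so the linearized relative-position dynamics are block lower-triangular with diagonal blocks $-A$ (Hurwitz by hypothesis) and $-D=-(k_\ell P_\ell+k_{\ell+1}P_{\ell+1})$ (negative definite by non-collinearity), giving local exponential stability for all positive gains without invoking Lemma \ref{lemma4}. The only quibble is the sign in your identity for $z_{n+1,2}$ (it should be $z_{n+1,2}=z_{n+1,1}+z_{12}$ with the paper's convention $z_{ij}=p_j-p_i$), which does not affect the argument.
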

\begin{figure}
    \centering
    \includegraphics[width=0.6 \linewidth]{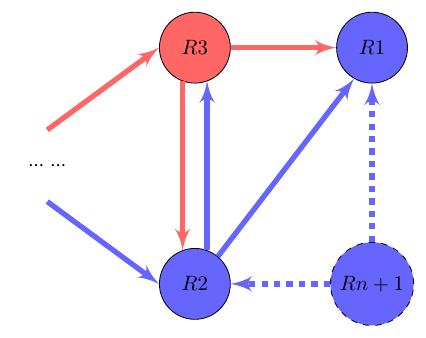}
     \caption{Unilateral-connection merging into a stable formation}
    \label{fig:unilateral_connection}
\end{figure}

\begin{proof}
In the following, we will prove the proposition for the case when $R(n+1)$ is a distance-based agent and the closed-loop system is given by \eqref{n+1_unilateral_distance}. For the case when $R(n+1)$ is a bearing-only agent, the proof follows similarly. 

Following the same process in the proof of Proposition \ref{pro:two_coleaders}, the linearization of the closed-loop relative-position dynamics at desired relative positions yields 
\begin{equation}\label{z_m_hat_unilateral}
\begin{aligned}
\dot{\hat{z}}_m=\left[ \begin{array}{c}
	\dot{\hat{z}}\\
	\dot{\hat{z}}_{n+1,1}\\
\end{array} \right] =\left[ \begin{matrix}
	-A&		0\\
	*&		-D\\
\end{matrix} \right] \left[ \begin{array}{c}
	\hat{z}\\
	\hat{z}_{n+1,1}\\
\end{array} \right]
\end{aligned}
\end{equation}
where $B=-k_\ell P_l$, $C=k_{\ell+1} P_{\ell+1}$ and $D=\left(k_\ell P_{\ell} +k_{\ell+1} P_{\ell+1}\right)$ with $P_{\ell}$, $P_{\ell+1}$ be the projection matrices as defined before Lemma \ref{lemma2} and $\alpha_l = \angle z_{1,n+1}^{*}$, $\alpha_{l+1} = \angle z_{n+1,2}^{*}$. Because $-A$ is Hurwitz and $-D$ are negative definite, we can conclude the system \eqref{z_m_hat_unilateral} is globally exponential stable. This implies that  \eqref{n+1_unilateral_distance} is locally asymptotically stable at desired relative positions. 

\end{proof}

\section{SIMULATION}
\subsection{Simulation setup}
In this subsection, we present simulation setup for 6-agent heterogeneous sensing formation that is formed by merging two extra agent at-a-time. The merging process incorporates all the different merging cases described in the previous section, where we start from two agent systems and it is expanded to 6-agent system. The different merging cases in this setup are summarized in Figure \ref{fig:simulation_graph}. For the desired formation, we set the distance and bearing constraints as $\{ d_{12}^* = 5, g_{21}^* = \bbm{1, 0}^\top, d_{13}^*=5\sqrt{2}, g_{23}^* = \bbm{0, -1}^\top, g_{34}^* = \bbm{-\frac{\sqrt{2}}{2}, \frac{\sqrt{2}}{2}}^\top, d_{42}^*=5, d_{45}^*=7.66, g_{56}^* = \bbm{1 , 0 }^\top, d_{61}^*= 7.66, d_{62}^*= 9.05\}$.

\begin{figure}[!t]
    \centering
    {
    \subfigure[First merging case: a bearing-only agent $R3$  and a distance-based agent $R4$ join the $2$-agent formation systems]
    {
    \includegraphics[width=0.25\textwidth]{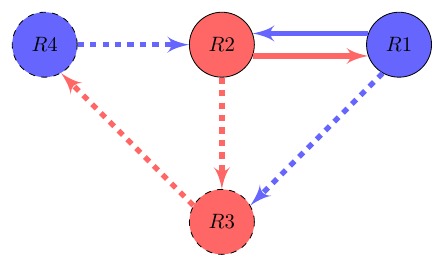} 
    \label{fig:simulation_graph_4_agent}
    }\,\,\, 
    \subfigure[Second merging case: a bearing-only agent $R6$ and a distance-based agent $R5$ join the $4$-agent formation systems]
    {\includegraphics[width=0.25\textwidth]{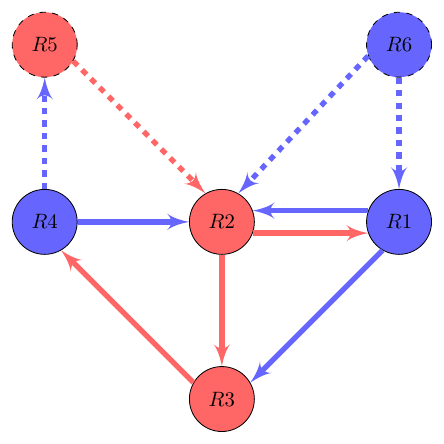} 
    \label{fig:simulation_graph_6_agent}
    }
    }
    \caption{Underlying graphs of formation system in simulation 
    }
    \label{fig:simulation_graph}
\end{figure}

\subsection{Simulation results}

In this part, we present simulation results of various the merging processes from 2-agent heterogeneous formation towards 6-agent heterogeneous  formation as described before.

After scaling up, the closed-loop dynamics is given by \begin{equation}
    \left[ \begin{array}{c}
	\dot{p}_1\\
	\dot{p}_2\\
	\dot{p}_3\\
	\dot{p}_4\\
	\dot{p}_5\\
        \dot{p}_6
\end{array} \right] = \left[ \begin{array}{c}
	k_1\frac{e_{1}^{\dist}z_{12}}{2\left\| z_{12}^{*} \right\| ^2}+k_2\frac{e_{2}^{\dist}z_{13}}{2\left\| z_{13}^{*} \right\| ^2}\\
	k_3\left\| z_{21}^{*} \right\| e_{1}^{\bear} + k_4\left\| z_{23}^{*} \right\| e_{2}^{\bear}\\
	k_5\left\| z_{34}^{*} \right\| e_{3}^{\bear}\\
	k_6\frac{e_{3}^{\dist}z_{42}}{2\left\| z_{42}^{*} \right\| ^2} + k_7\frac{e_{4}^{\dist}z_{45}}{2\left\| z_{45}^{*} \right\| ^2}\\
	k_8\left\| z_{56}^{*} \right\| e_{4}^{\bear}\\
 k_9\frac{e_{5}^{\dist}z_{61}}{2\left\| z_{61}^{*} \right\| ^2}+k_{10}\frac{e_{6}^{\dist}z_{62}}{2\left\| z_{62}^{*} \right\| ^2}
\end{array} \right] 
\end{equation}
where $k_1=k_3=3$ as control gains of original two agents. Following the computations in the proof of Proposition \ref{pro:two_coleaders}, the lower bound for $k_2$ and $k_4$ are given by $k_2^*=\frac{12\sqrt{3}}{23}$. For our selection, we opt for $k_2= k_4 =3$. In accordance with Proposition \ref{pro:one_leader}, we can choose any $k_5=k_6>0$. To ensure a similar convergence rate, we set $k_5=k_6 = 3$. Employing the same computations, we derive the lower bound of $k_7$ and $k_8$ as $3.695$ and determine that $k_9$ and $k_{10}$ can be any positive number. Accordingly, we set $k_7=k_8 = 3.8$ and $k_9 = k_{10} =3$. In the simulation, all gains are set as mentioned, and the result is shown in Figure \ref{fig:simulation_result_6_agent}. It can be seen from this figure that the six-agent formation system is able to maintain the desired formation shape. 
\begin{figure}
  \centering
  {
  \subfigure[Original 2-agent system]{\includegraphics[width=0.48\columnwidth]{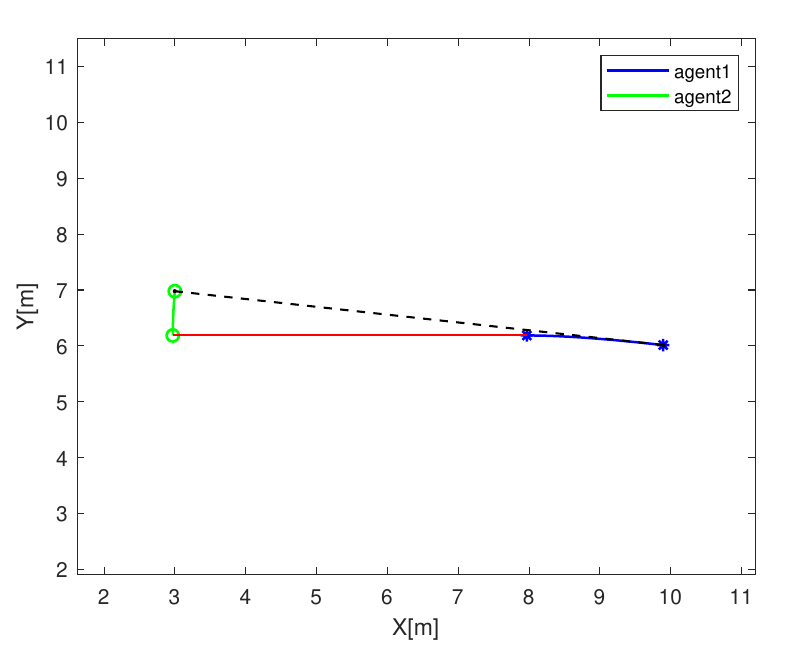}
  \label{fig:simulation_result_2_agent}}
  \subfigure[4-agent system]{\includegraphics[width=0.48\columnwidth]{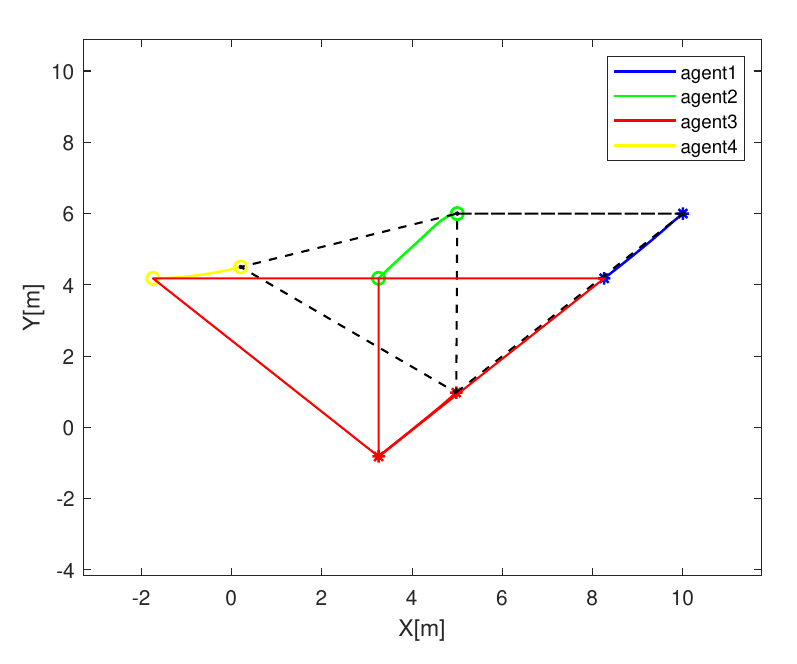}
  \label{fig:simulation_result_4_agent}}
  \subfigure[6-agent system]{\includegraphics[width=0.85\columnwidth]{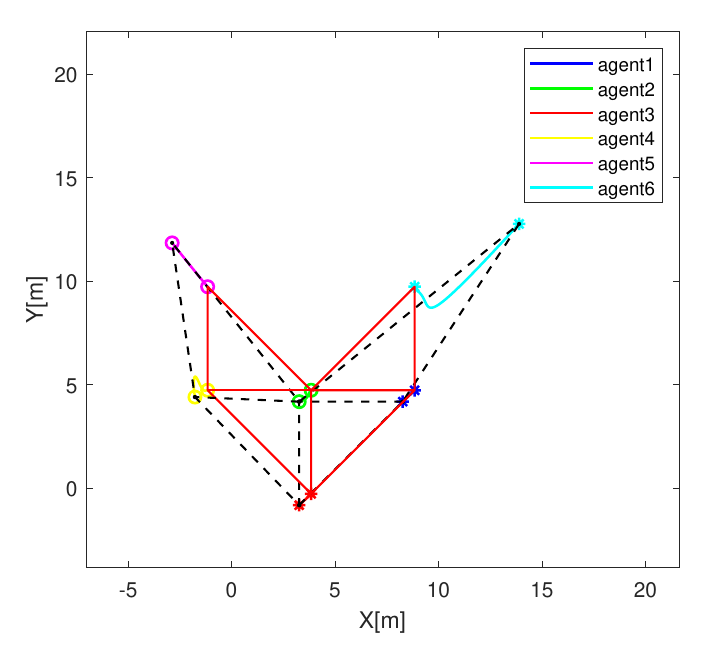}
  \label{fig:simulation_result_6_agent}}
  }
  \caption{Robot trajectories for the 6-agent heterogeneous sensing formation system setup; 
    (\protect\tikz{\protect\draw[blue, line width = 3] (0, 0) -- (0.2, 0);},
    \protect\tikz{\protect\draw[green, line width = 3] (0, 0) -- (0.2, 0);},
    \protect\tikz{\protect\draw[red, line width = 3] (0, 0) -- (0.2, 0);},
    \protect\tikz{\protect\draw[yellow, line width = 3] (0, 0) -- (0.2, 0);},
    \protect\tikz{\protect\draw[magenta, line width = 3] (0, 0) -- (0.2, 0);},
    \protect\tikz{\protect\draw[cyan, line width = 3] (0, 0) -- (0.2, 0);}) 
    = (\texttt{R1}, \texttt{R2}, \texttt{R3},\texttt{R4},\texttt{R5},\texttt{R6}), 
    $ \circ $ represents the initial and $ * $ is the final position. We have an initial configuration (dashed lines) where robots converge to the correct formation shape (solid lines) in every step.}
\end{figure}

\section{Conclusions}

In the current work, we have considered the formation control problem in which agents have heterogeneous sensing mechanism, distance-based or bearing-only. Using heterogeneous sensing rigidity framework for defining the desired shape and using the persistence notion for heterogeneous sensing formation, we study the use of gradient-based control law for achieving the desired formation with heterogeneous sensing information. The distributed heterogeneous control design and analysis start from a core team of agents that can maintain the formation and it is scaled up by adding an agent one-at-a-time to the network. Simulation results show the validity of the theoretical results.





\bibliographystyle{IEEEtran}
\bibliography{ref}


\begin{IEEEbiography}[{\includegraphics[width=1in,height=1.25in,clip,keepaspectratio]{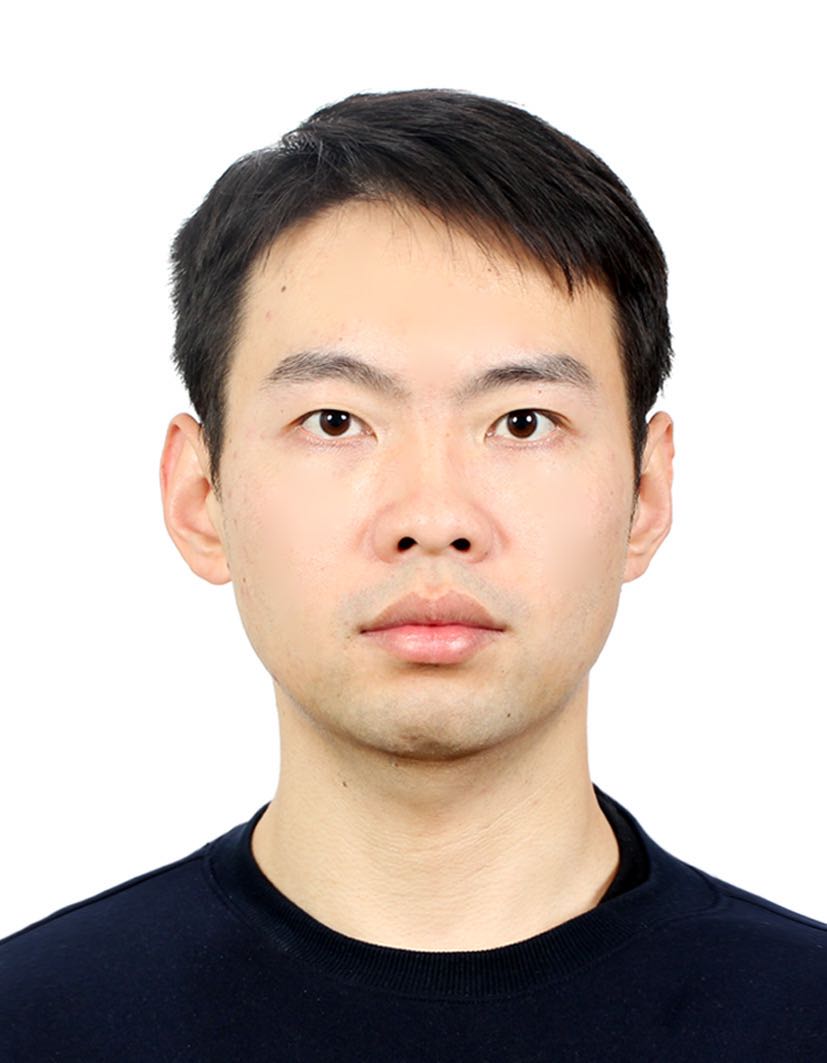}}]{Jin Chen}
received the B.Sc. degree in vehicle engineering from Shanghai Jiao Tong University, China, in 2020.
He is currently with the Faculty of Science and Engineering, University of Groningen, The Netherlands, working toward his Ph.D. degree.
His research interests include nonlinear control, formation control, and robotic systems.
\end{IEEEbiography}

\begin{IEEEbiography}[{\includegraphics[width=1in,height=1.25in,clip,keepaspectratio]{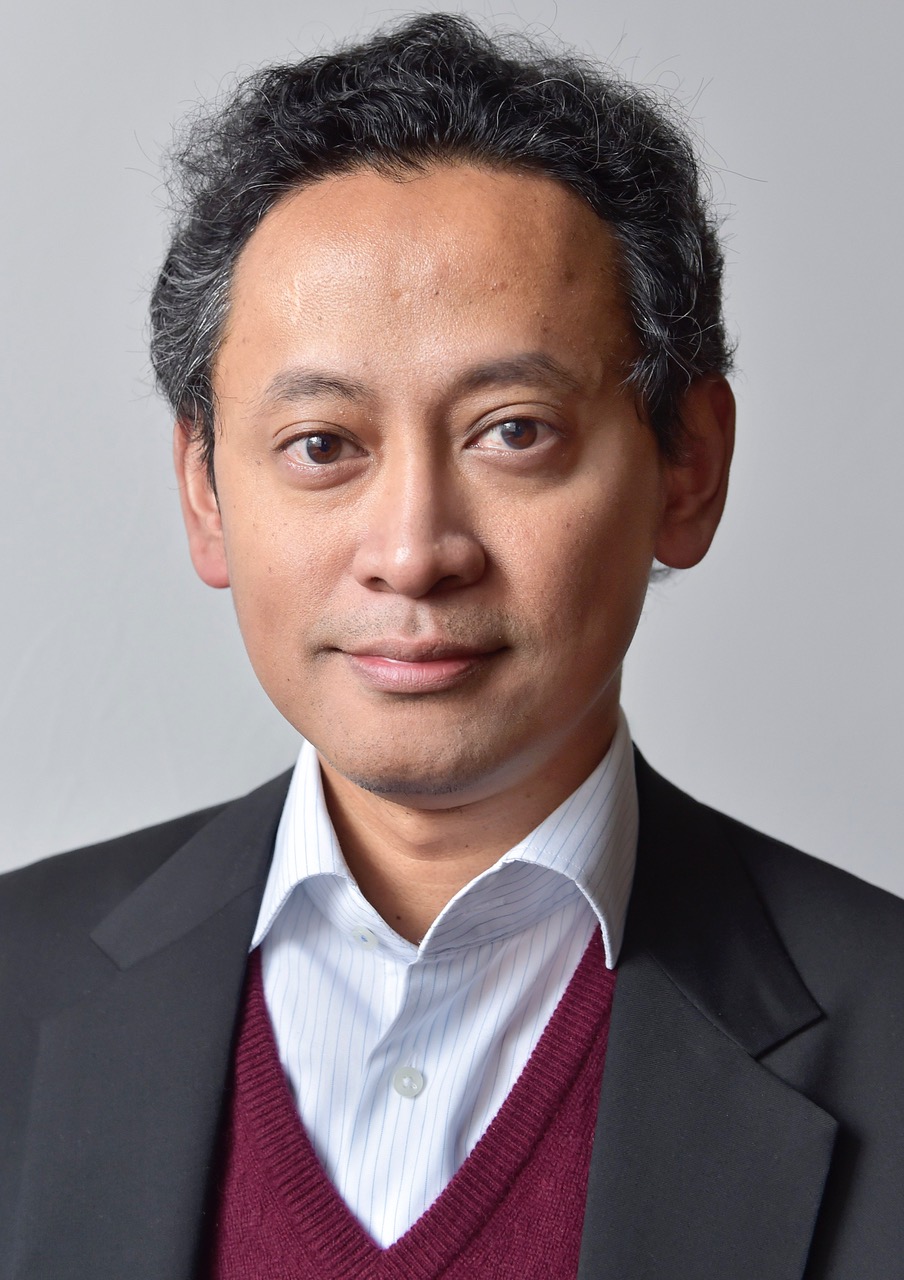}}]{Bayu Jayawardhana}
(SM’13) received the B.Sc. degree in electrical and electronics engineering from the Institut Teknologi Bandung, Bandung, Indonesia, in 2000, the M.Eng. degree in electrical and electronics engineering from the Nanyang Technological University, Singapore, in 2003, and the Ph.D. degree in electrical and electronics engineering from Imperial College London, London, U.K., in 2006. He is currently a professor of mechatronics and control of nonlinear systems in the Faculty of Science and Engineering, University of Groningen, The Netherlands. He was with Dept. Mathematical Sciences, Bath University, Bath, U.K., and with Manchester Interdisciplinary Biocentre, University of Manchester, Manchester, U.K. His research interests include the analysis of nonlinear systems, systems with hysteresis, mechatronics, systems and synthetic biology. Prof. Jayawardhana is a Subject Editor of the International Journal of Robust and Nonlinear Control and an Associate Editor of the European Journal of Control. 
\end{IEEEbiography}

\begin{IEEEbiography}[{\includegraphics[width=1in,height=1.25in,clip,keepaspectratio]{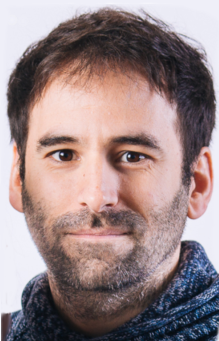}}]{H\'ector Garc\'ia de Marina} received the Ph.D. degree in systems and control from the University of Groningen, the Netherlands, in 2016.

He was a Postdoctoral Research Associate with the Ecole Nationale de l’Aviation Civile, Toulouse, France, and an Assistant Professor with the Unmanned Aerial Systems Center, University of Southern Denmark, Odense, Denmark. Since 2022, he is a Ramón y Cajal researcher with the Department of Computer Engineering, Automation and Robotics, and with CITIC, Universidad de Granada, Spain. He is an Associate Editor for IEEE Transactions on Robotics. His current research interests include multi-agent systems and the design of guidance navigation and control systems for autonomous vehicles.
\end{IEEEbiography}

\end{document}